\crefname{section}{\S}{\S\S}
\crefname{appendix}{App.}{Apps.}
\crefname{theorem}{Thm.}{Thms.}
\crefname{lemma}{Lem.}{Lems.}
\crefname{claim}{Clm.}{Clms.}
\crefname{definition}{Def.}{Defs.}
\crefname{remark}{Rem.}{Rems.}
\DeclarePairedDelimiter\set{\{}{\}}
\mathchardef\mhyphen="2D %
\newcommand{\eqq}{\coloneqq}
\newcommand\RR{\mathbb{R}}
\newcommand\Ac{\mathcal{A}}
\newcommand\Bc{\mathcal{B}}
\newcommand\Cc{\mathcal{C}}
\newcommand\Hc{\mathcal{H}}
\newcommand\Rc{\mathcal{R}}
\newcommand{\advy}{\Ac}  %
\NewDocumentCommand{\Prob}{o m}{%
        \IfValueTF{#1}{%
                \operatornamewithlimits{Pr}_{#1}%
        }{%
                \operatorname{Pr}%
        }%
        \mathopen{}\left[#2\right]\mathclose{}
}
\newcommand\blockk{\text{\scalebox{0.5}{$\blacksquare$}}}
\newcommand{\weight}{\Gamma}  %
\newcommand{\cweight}{\overline{\Gamma}}  %
\newcommand{\bcweight}{\overline{\Gamma}_\blockk}  %
\newcommand\bhsp{\bm S^{\Hc}} %
\newcommand\basp{\bm S^{\Ac}} %
\newcommand\asptil{{\widetilde S}^{\Ac}} %
\newcommand\basptil{\bm {\widetilde S}^{\Ac}} %
\newcommand\bhvdf{\bm V^{\Hc}}  %
\newcommand\bavdf{\bm V^{\Ac}}  %
\newcommand\avdftil{{\widetilde V}^{\Ac}}  %
\newcommand\bavdftil{\bm {\widetilde V}^{\Ac}}  %
\newcommand\bhw{\bm W^{\Hc}} %
\newcommand\baw{\bm W^{\Ac}} %
\newcommand\awtil{{\widetilde W}^{\Ac}} %
\newcommand\bawtil{\bm {\widetilde W}^{\Ac}} %
\newcommand\RRgez{\RR_{\geq 0}} %
\newcommand\RRgz{\RR_{> 0}} %
\newcommand{\chain}{\Cc\Cc}  %
\newcommand{\hchain}{\chain^{\Hc}} %
\newcommand{\achain}{\chain^{\Ac}} %
\newcommand{\bchain}{\Bc\Cc}  %
\newcommand{\hbchain}{\bchain^{\Hc}} %
\newcommand{\abchain}{\bchain^{\Ac}} %
\newcommand{\ttil}{{\widetilde t}}
\newcommand{\Ttil}{{\widetilde T}}
\newcommand{\Ttilend}{{\Ttil_\eend}}
\newcommand{\til}{\widetilde}
\newcommand{\Vb}{\bm V}
\newcommand\Sb{\bm S}
\newcommand\Wb{\bm W}
\newcommand{\vb}{\bm v}
\newcommand{\ssb}{\bm s}
\newcommand{\wb}{\bm w}
\newcommand{\bVb}{\textbf{V}_\blockk}
\newcommand\bSb{\textbf{S}_\blockk}
\newcommand\bWb{\textbf{W}_\blockk}
\newcommand{\bVbmin}{\textbf{V}_{\min}}
\newcommand\bSbmin{\textbf{S}_{\min}}
\newcommand\bWbmin{\textbf{W}_{\min}}
\newcommand{\bVbmax}{\textbf{V}_{\max}}
\newcommand\bSbmax{\textbf{S}_{\max}}
\newcommand\bWbmax{\textbf{W}_{\max}}
\newcommand\res{\Rc}
\newcommand\ares{\Rc^\Ac}
\newcommand\hres{\Rc^\Hc}
\newcommand{\eend}{\mathsf{end}} 
\newcommand{\at}{\mathsf{AT}} %
\newcommand{\atinv}{\mathsf{AT}^{-1}} %
\newcommand{\stake}{\text{\textit{St}}}
\title{Nakamoto Consensus from Multiple Resources}
 \author[1]{Mirza Ahad Baig\orcidlink{0000-0003-3650-7893}}
 \author[1]{Christoph U. Günther\orcidlink{0009-0001-5790-695X}}
 \author[1]{Krzysztof Pietrzak}
 \affil[1]{Institute of Science and Technology Austria\\ \email{\{mirzaahad.baig, cguenthe, pietrzak\}@ista.ac.at}}
\begin{document}
\maketitle

\begin{abstract}
        The blocks in the Bitcoin blockchain \enquote{record} the amount of work $W$ that went into creating them through proofs of work. 
        When honest parties control a majority of the work, consensus is achieved by picking the chain with the highest recorded weight.
        Resources other than work have been considered to secure such longest-chain blockchains.
        In Chia, blocks record the amount of disk-space  $S$ (via a proof of space) and \emph{sequential} computational steps $V$ (through a VDF).

        In this paper, we ask what weight functions $\Gamma(S,V,W)$ (that assign a weight to a block as a function of the recorded space, speed, and work) are secure 
        in the sense that whenever the weight of the resources controlled by honest parties is larger than the weight of adversarial parties, the blockchain is secure against private double-spending attacks.

        We completely classify such functions in an idealized ``continuous'' model: $\Gamma(S,V,W)$ is secure against private double-spending attacks if and only if it is homogeneous of degree one in the \enquote{timed} resources $V$ and $W$, i.e., 
        $\alpha\Gamma(S,V,W)=\Gamma(S,\alpha  V, \alpha W)$.
        This includes the Bitcoin rule $\Gamma(S,V,W)=W$ and the Chia rule $\Gamma(S,V,W) = S \cdot V$. In a more realistic model where blocks are created at discrete time-points, one additionally needs some mild assumptions on the dependency on $S$ (basically, the weight should not grow too much if $S$ is slightly increased, say linear as in Chia).

        Our classification is more general and allows various instantiations of the same resource.
        It provides a powerful tool for designing new longest-chain blockchains. 
        E.g., consider combining different PoWs to counter centralization, 
        say the Bitcoin PoW $W_1$ and a memory-hard PoW $W_2$. 
        Previous work suggested to use $W_1+W_2$ as weight.
        Our results show that using e.g., $\sqrt{W_1}\cdot \sqrt{W_2}$  or $\min\{W_1,W_2\}$ are 
        also secure, and we argue that in practice these are much better choices.
\end{abstract}
\section{Introduction}
Achieving consensus in a permissionless setting is a famously difficult problem.
Nakamoto solved it by introducing the Bitcoin blockchain~\cite{bitcoin} that achieves consensus on a chain of blocks by having parties expend a \emph{resource}: parallelizable computation (commonly called \emph{work}).

In Bitcoin, appending a block to a chain requires a \emph{proof-of-work} (PoW), i.e., solving a computationally-expensive puzzle.
This puzzle is designed such that each block represents the (expected) amount of computation that was expended to append it.
As a consequence, each chain represents the total amount of computation required to create it.
This allows for a simple consensus mechanism commonly called the \emph{longest-chain rule}:
Given two different chains, pick the one that required more computation to create.
Note that a more accurate term is \emph{heaviest-chain rule}, which we will use interchangeably throughout the paper.

While this design achieves consensus, more importantly it also achieves a property called \emph{persistence}~\cite{EC:GarKiaLeo15} under a simple economic assumption:
As long as honest parties control more than half of the computational resources committed to Bitcoin, a block that has been part of the chain for some time will always be part of the chain.
Since Bitcoin blocks contain transactions, this effectively means that an adversary cannot \emph{double-spend} a coin.

While Bitcoin's design is simple, its reliance on PoW has its flaws.
For example, it wastes a lot of energy, and the manufacturing of the PoW hardware has become increasingly centralized.
Amongst other reasons, this has lead to the development of other blockchain protocols.
These protocols can be broadly categorized along three axes.

\begin{description}
        \item[The Underlying Resources]
                Bitcoin relies on parallelizable computation, which is a \emph{physical resource}.
                Two natural alternatives are \emph{disk space} and \emph{sequential computation}.
                A different class of resources is not physical, but \emph{on-chain}~\cite{ACLVZ23,permissionlessconsensus}.
                The most well-known is \emph{stake}, which comprises different approaches that essentially rely on the on-chain coin balance of a party.
        \item[The Consensus Design]
                A broad distinction is between Byzantine\hyp{}fault\hyp{}tolerant\hyp{}style (e.g., Algorand~\cite{algorand} or Filecoin~\cite{filecoin}) and longest-chain protocols.
                Within themselves, longest-chain protocols differ in their \emph{fork-choice rule}, which prescribes how they select the longest chain. 
                Some are \emph{Nakamoto-like} and, like Bitcoin, pick the heaviest chain, i.e., the one whose blocks cumulatively required the most resources to create (e.g., Chia~\cite{chia}).
                Others rely on more complex fork-choice rules that, e.g., take into account where two chains fork (e.g., Ouroboros~\cite{CCS:BGKRZ18}). %

        \item[The Degree of Permissionlessness]
                Roughgarden and Lewis-Pye~\cite{permissionlessconsensus} observe that \enquote{permissionless} is colloquially used to describe different settings that vary in how permissionless they are.
                \emph{Fully-permissionless} protocols function obliviously to current protocol participants (e.g., Bitcoin or Chia).
                These differ from protocols that require some information about participants (e.g., how many coins they are staking in Algorand, or commitments to disk space in Filecoin).
                While the latter are still permissionless in the sense that anyone can participate, they impose stricter requirements on participants.
\end{description}

\subsection{Our Contributions}
In this paper, we completely characterize the design space of \emph{Nakamoto-like} protocols operating in the \emph{fully-permissionless} setting using the \emph{physical resources} disk space, sequential computation, and parallelizable computation that are secure against \emph{private double-spending} attacks.

We observe that Nakamoto-like protocols only differ in what resources their blocks record, and---especially if multiple resources are used---how they decide which of two blocks required \emph{more resources} to create.
We model these differences using an abstraction called the \emph{weight function} $\Gamma\colon \RR^3 \to \RR$. 
It takes as input the three resources possibly recorded by a block, i.e., disk space $S$, sequential computation $V$\footnote{Think $V$ as in velocity of the sequential computation or $V$ as in verifiable delay function (VDF), the cryptographic primitive usually used to capture the number of sequential computation steps.} and parallelizable computation/work $W$, and outputs the \emph{weight} $\Gamma(S, V, W)$ of a block.
In the context of weight functions, the heaviest-chain rule now picks the chain with the highest weight where a chain's weight is defined as the sum of the weight of all its blocks.

To get an intuition for the weight function abstraction, let us provide some examples.
The weight function $\Gamma_\text{Bitcoin}(S, V, W) = W$ describes Bitcoin (or any other similar PoW-based Nakamoto-like chain, e.g., Litecoin\footnote{\url{https://litecoin.org/}}).
More interesting is Chia~\cite{chia}, a Nakamoto-like chain combining disk space and sequential computation following $\Gamma_\text{Chia}(S, V, W) = S \cdot V$.

Our main result, informally stated in Theorem~\ref{thm:main} below, fully characterizes which weight functions result in a Nakamoto-like blockchain that is secure against private double-spending attacks~\cite{bitcoin,CCS:DKTTVW20} in the fully-permissionless setting~\cite{permissionlessconsensus}.

In this work we address double spending, but not economic attacks such as selfish mining~\cite{FC:EyaSir14}.
Such attacks are an orthogonal issue and require a different set of tools. Preventing double spending gives some additional guarantees, like the fact that one can trust the timestamps on the chain~\cite{EPRINT:TziSriZin23}. 

To achieve a broad and simple characterization, we necessarily have to abstract implementation details and generalize over different blockchain designs. We operate under the maxim that a good design principle is that chains should reflect the resources that went into creating them. In practical instantiation of a blockchain this may not be fully guaranteed. Network delay and limited block space lead to resources only being approximated by blocks.

Network delays have been well-studied for single resource blockchains like Bitcoin, Chia and Ethereum~\cite{CCS:DKTTVW20,C:GazRenRus23,AFT:GuoRen22,CCS:GazRenRus22} and they add a multiplicative factor $\chi(\Delta) < 1$ ($\Delta$ is the network delay) to the honest resource in the honest majority assumption, where $\chi$ depends on the particular blockchain. Thus taking network delays into account would only quantitatively affect our honest majority assumption and not give any new interesting insights. As for resources being accurately reflected on chain: this also depends on the precise implementation of the chain. There are multiple options: either take an average over multiple blocks (akin to how Bitcoin difficulty changes), or put multiple proofs into one block/epoch (for example, including the top $k$ partial PoW solutions in a block, or like in Chia where multiple PoSpace blocks come from the same challenge). This leads to a good approximation of the total resources available at any point of time. To compensate for any loss we again need to include a multiplicative factor to the honest resources in the honest majority assumption. The precise formula would depend on the exact implementation details. Since our focus is on a unified idealized model, we abstract away these details and leave the question of best practical design and trade-offs involved in it as future work. There are other attacks like grinding and double dipping\footnote{A high level overview on these attacks can be found on \url{https://docs.chia.net/longest-chain-protocols/}} against chains that use space, but we have techniques to prevent them~\cite{FC:PKFGAP18,BDKOTVWZ22}. Thus we assume they are implicitly taken care of in the design.
One issue are so called replotting attacks. As we'll discuss in \S\ref{S:replot}, in practice replotting can be prevented with a careful design putting bounds on the total weight of individual blocks. Since replotting is not as well understood as the other issues, we will explicitly exclude replotting in the statement of the theorem below. 
We'll discuss our model in more detail in \S\ref{S:model}. 
\begin{theorem}[Main, Informal]\label{thm:main}
        In the fully-permissionless~\cite{permissionlessconsensus} setting and ignoring replotting attacks, a Nakamoto-like blockchain is secure against private double-spending attacks under the honest majority assumption (cf. below) \emph{if and only if} the weight function $\Gamma(S, V, W)$ fulfills the following conditions:
        \begin{enumerate}
                \item \textbf{Monotone:} $\Gamma$ is monotonically increasing\label{enum:thm:main:cond1}
                \item \textbf{Homogeneous in $V, W$:} $\alpha\Gamma(S, V, W) = \Gamma(S, \alpha V, \alpha W)$ for $\alpha > 0$\label{enum:thm:main:cond2}
        \end{enumerate}
The honest majority assumption states that at any point in time during the attack $\Gamma$ applied to the resources of the honest parties is larger than $\Gamma$ applied to the resources of the adversary.
\end{theorem}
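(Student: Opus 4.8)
The plan is to reduce security to a \emph{weight-rate race} between the honest public chain and the adversary's private chain, and then to show that the two conditions are exactly what makes the honest-majority assumption (phrased through $\Gamma$) translate into a genuine rate advantage. The organizing object is the weight a party can accumulate in a time window of length $t$ from physical resources $(S,V,W)$. In the idealized model this is the optimum $\sup \sum_i \Gamma(S_i, V_i t_i, W_i t_i)$ over all ways of cutting the window into sub-blocks of durations $t_i$ with $\sum_i t_i = t$, and all under-uses $(S_i,V_i,W_i)\le (S,V,W)$ per sub-block. Here the timed resources $V,W$ are multiplied by $t_i$ while the untimed $S$ is not, which is precisely why homogeneity is asked only in $(V,W)$. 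Writing $f(t)\eqq\Gamma(S,Vt,Wt)$ for the full-rate, single-block profile, homogeneity of degree one is equivalent to $f$ being linear through the origin, $f(t)=t\,\Gamma(S,V,W)$.

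For sufficiency I would show that monotonicity and degree-one homogeneity force the optimum above to equal exactly $t\cdot\Gamma(S,V,W)$: monotonicity removes any gain from under-using resources, and linearity of $f$ makes the per-sub-block weights additive, so no partition beats the trivial one. Hence the honest chain gains weight at rate exactly $\Gamma(\hsp,\hvdf,\hw)$ \emph{regardless of its block structure}, and the adversary's private chain at rate at most $\Gamma(\asp,\avdf,\aw)$. The honest-majority assumption gives $\Gamma(\hsp,\hvdf,\hw) > \Gamma(\asp,\avdf,\aw)$ throughout, so the honest lead over any private chain grows without bound; a standard burying argument then shows a sufficiently deep block can never be displaced, which is persistence and rules out private double-spends.

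For necessity I would prove the contrapositive by exhibiting an attack whenever a condition fails, locating the profitable deviation via the same optimization. If $\Gamma$ is not monotone, pick $(S',V',W')\le(\asp,\avdf,\aw)$ with $\Gamma(S',V',W') > \Gamma(\asp,\avdf,\aw)$; the adversary simply under-uses its resources, accumulating at rate $\Gamma(S',V',W')$, and we choose honest resources with $\Gamma(\asp,\avdf,\aw) < \Gamma(\hsp,\hvdf,\hw) < \Gamma(S',V',W')$, so the assumption holds yet the private chain wins. If $\Gamma$ is not homogeneous of degree one in $(V,W)$, then $f$ is nonlinear, so $\sup_t f(t)/t$ strictly exceeds $f(1)=\Gamma(\asp,\avdf,\aw)$: in the superlinear case the adversary merges its timed resources into one \emph{fat} long-duration block, and in the sublinear case it splits into many \emph{thin} blocks, in either case extracting a rate above $f(1)$ while the honest chain is pinned to the protocol's fixed block cadence. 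Choosing the honest-majority margin smaller than this rate boost again yields an attack despite honest majority.

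I expect the main obstacle to be the necessity bookkeeping rather than the race itself: one must argue that the adversary can \emph{physically realize} the deviating structures (a single over-long VDF/PoW block, or arbitrarily fine ones), that the honest parties are genuinely constrained to the canonical cadence so the deviation is an \emph{asymmetric} advantage, and that the deviation raises weight \emph{per unit time} rather than only once. Turning the strict rate gap into an unboundedly growing, hence irreversible, weight lead is the other delicate point, and it is where degree-one homogeneity is used most sharply, since any other degree lets one side exploit block granularity.
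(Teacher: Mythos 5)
Your sufficiency direction is essentially the paper's argument in different clothing: monotonicity absorbs under-use of resources, and degree-one homogeneity in the timed resources makes any time reparametrization weight-neutral (the paper proves this by a change of variables $t=\atinv(\ttil)$ in the weight integral, where the factor $\phi$ from homogeneity cancels the Jacobian $1/\phi$; your partition-into-sub-blocks additivity is the discrete shadow of the same computation). That half is fine, modulo the fact that the paper's security notion only compares total weights over the attack window, so no burying argument is needed.

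The necessity direction has a genuine gap: both of your attack constructions require finding honest resources whose weight lies \emph{strictly between} two prescribed values (between $\Gamma(\basp,\bavdf,\baw)$ and $\Gamma(S',V',W')$ in the non-monotone case, and between the adversary's weight and its boosted rate $\sup_t f(t)/t$ in the non-homogeneous case). For a general monotone weight function nothing guarantees such an intermediate value exists --- $\Gamma$ need not be continuous or have dense range, and the boosted rate $f(\alpha)/\alpha$ need not itself be a value of $\Gamma$. The paper avoids this entirely: in the non-monotone case it gives the honest party the small-but-heavy tuple and the adversary the large-but-light one, and the adversary under-uses to force a \emph{tie}, which already violates the strict inequality demanded by security; in the non-homogeneous case it sets the honest and adversarial weights \emph{equal} on a long prefix $[0,T_0)$ and strictly separated only on a unit interval at the end (the precondition only requires strict dominance on \emph{some} subinterval), so the adversary's per-unit-time rate boost $\beta/\alpha$ accumulates over $T_0$ and eventually swamps the honest party's bounded advantage. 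Your uniform-margin formulation cannot be instantiated in general, and you would also need to handle the degenerate cases the paper isolates as Cases 1a and 1d, where monotonicity forces $\Gamma(\ssb,\alpha\vb,\alpha\wb)=\Gamma(\ssb,\vb,\wb)$ and the rate boost comes purely from stretching time; there the attack must additionally invoke non-constancy of $\Gamma$ to manufacture any interval of strict honest advantage at all. Finally, your claim that nonlinearity of $f(t)=\Gamma(S,Vt,Wt)$ at a fixed base point forces $\sup_t f(t)/t>f(1)$ is false as stated (e.g.\ $f$ linear up to $t=1$ and constant after); one must also be allowed to move the base point to $(S,\alpha V,\alpha W)$, which is the paper's reduction of its Case 2 to Case 1.
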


\cref{thm:main} is in an ideal model. In~\cref{sec:discrete} we provide a less idealized, discrete model and prove a related result in this model which essentially states that every weight function that is insecure in ideal model is also insecure in the more realistic discrete model. On the other hand every weight function secure in the ideal model is secure in the discrete model with a slightly stronger honest majority condition. 

Finally, we deal with the replotting attacks and how to mitigate them in~\cref{S:replot}.

\subsection{Implications of our Result}\label{sec:intro:impl}
\subsubsection{Space-based Blockchains}
Three very different blockchain designs whose main resource is disk space are Chia~\cite{chia}, Filecoin~\cite{filecoin}, and Spacemint~\cite{FC:PKFGAP18}.
The first two are deployed and running in practice, while the latter is an academic proposal.

Of the above, Chia is the only one captured by our result, i.e., it is a Nakamoto-like protocol in the fully-permissionless setting.
Its weight function is $\Gamma_\text{Chia}(S, V, W) = S \cdot V$ which is secure against double-spending attacks by our \cref{thm:main}.

Spacemint was an early proposal of a fully-permissionless blockchain based solely on proofs of space, and thus cannot be secure against double-spending attacks according to \cref{thm:main}.  
The design of Spacemint slightly defers from Nakamoto's chain-selection rule as older blocks are given less weight than more recent ones, but even with this twist the security of Spacemint against double-spending only holds if the honest space never decreases by too much, and never increases too fast.

In contrast, Filecoin is not captured by our result because it is not fully-permissionless, but instead operates in the stronger quasi-permissionless setting (cf.~\cite{permissionlessconsensus}).\footnote{
        Filecoin is also not Nakamoto-like since it is a DAG-based protocol (using GHOST, the \emph{Greediest Heaviest-Observed Sub-Tree} rule~\cite{EPRINT:SomZoh13}) together with a finality gadget~\cite{filecoinf3}.
        Note that the finality gadget is not essential, and GHOST is the DAG-analogue to Bitcoin's longest/heaviest-chain rule.
        So, for our purposes, Filecoin could easily be modified to be Nakamoto-like (this has also been mentioned in~\cite{ACLVZ23}). 
        As we'll elaborate in \S\ref{S:replot}, it seems running a space based chain in the quasi-permissionless setting is quite expensive as to prevent reploting parties must constantly prove they hold the committed space and this proofs need to be recorded on chain.
} 
Since Filecoin's weight function is $\Gamma_\text{Filecoin}(S, V, W) = S$, \cref{thm:main} essentially shows that a setting stronger than the fully-permissionless one is necessary. 

Let us stress that, even in the fully-permissionless setting and when only relying on space, we only rule out secure constructions of Nakamoto-like blockchains (where the weight of a chain is the sum of the weights of its blocks, and the chain selection rule picks the heaviest chain). While most fully-permissionless blockchains are of this form, this does not rule out the possibility that a completely different chain selection rule would be secure. A recent work~\cite{eprint:BaigPie25} shows that, unfortunately, this is not the case, 
and no such chain-selection rule exists. It gives a concrete attack against any chain-selection rule, and an almost matching lower-bound, i.e., a concrete (albeit very strange) chain-selection rule for which this attack is basically optimal.

\subsubsection{Combining multiple PoWs}
Since the production of Bitcoin mining-hardware has become increasingly centralized, one might consider combining two different PoWs, e.g., $W_1$ using SHA256 and $W_2$ using Argon2.
As stated above, \cref{thm:main} only considers one parallel work $W$, but it naturally extends to multiple resources of each type.
In particular, our results capture weight functions such as $\Gamma(W_1, \ldots,W_k)$ with Condition~\ref{enum:thm:main:cond2} being $\alpha\Gamma(W_1, \ldots, W_k) = \Gamma(\alpha W_1, \ldots, \alpha W_k)$ for $\alpha > 0$.

Prior work~\cite{FWKKLVW22} suggested the weight function $\Gamma(W_1, \ldots, W_k) = \sum_{i=1}^k \omega_iW_i$ with constants $\omega_i$.
By \cref{thm:main}, this is secure against private double-spending, but not a desirable weight function in practice.
Even though the constants $\omega_i$ can be used to calibrate the contribution of each PoW, it seems difficult to realize this in a way that would prevent miners to ultimately only invest in the cheapest PoW.

Our result show that more interesting combinations of $W_1, \ldots, W_k$ are possible.
The first draws inspiration from automated-market-makers\footnote{\url{https://en.wikipedia.org/wiki/Constant_function_market_maker}} and is defined as
\begin{equation*}
        \Gamma(W_1, \ldots, W_k) = \Pi_{i = 1}^k W_i^{1/k}.
\end{equation*}
To maximize this weight function (for a given budget), one would have to invest into mining hardware for all PoWs at a similar rate.

Another option is the Leontief utilities function\footnote{\url{https://en.wikipedia.org/wiki/Leontief_utilities}}
\begin{equation*}
        \Gamma(W_1, \ldots, W_k) = \min\set*{W_1, \ldots, W_k}
\end{equation*}
which ensures that all PoWs must significantly contribute. 

Our work just classifies the weight functions that are secure in the sense that we get security against private double-spending whenever the honest parties control resources of higher weight (as specified by the weight function). A question that is mostly orthogonal to this work is to investigate which of those weight functions are also interesting from a practical perspective, say because they incentivize decentralization or other desirable properties. Let us observe that the class of secure weight functions does contain functions that make little sense in practice, for example the function $\Gamma(V)=V$ which simply counts the number of VDF steps. A blockchain based on this weight function would be secure assuming some honest party holds a VDF that is faster than the VDF held by the adversary.

\subsection{Model and Modelling Rationale}\label{S:model}
\subsubsection{Modelling Resources}
Our model captures resources that are \emph{external} to the chain, i.e., \emph{physical resources}.
In particular, we consider disk space $S$, sequential computation $V$, and parallelizable computation $W$ where we allow multiple resources per type, e.g., $W_1$ and $W_2$.\footnote{
        These are three fundamental resources in computation, and also the most popular physical resources used for blockchains.
        Nevertheless, we believe our model could be extended to other external resources.
}
Each resource is modelled as a function mapping time $\RRgez$ to an amount $\RRgz$.
This is expressive enough to capture, e.g., Bitcoin, any other PoW-based blockchain, or Chia.

In practice, cryptographic primitives are used to track these resources, usually \emph{Proof-of-Space} (PoSpace)~\cite{C:DFKP15}, \emph{Verifiable Delay Functions} (VDFs)~\cite{C:BBBF18,EC:Wesolowski19,ITCS:Pietrzak19b}, and \emph{Proof-of-Work} (PoW).
Our modelling essentially assumes a perfect primitive, glossing over implementation details and any probabilistic nature of the resource (similar to~\cite{terner22}). 

In practice parallel work $W$ as captured by a PoW, and sequential work $V$ as captured by a VDF are very different. $W$ is a quantitative resource in the sense that one can double it by investing twice as much, while $V$ is a qualitative resource as it measures the speed of the fastest available VDF. From the perspective of our Theorem on the other hand, $W$ and $V$ behave the same, the only thing that matters in the (proof of the) Theorem is that $W$ and $V$ are ``timed'' resources in the sense that their unit is something ``per second''. $W$ and $S$ on the other hand are both quantitative resources, but behave very differently.

\subsubsection{Reasons for Omitting Stake}
First, as described by Roughgarden and Lewis-Pye~\cite{permissionlessconsensus}, stake-based blockchains do not operate in the fully-permissionless setting.
Therefore, since our result targets this setting, modelling stake is not possible.

Another reason is that in our modeling we assume that parties hold some resources at some given time, for an on-chain resource like stake this is not well defined as the resource is only defined with respect to some particular chain, i.e., for different forks the parties would hold different resources at the same time. To make issues even more tricky, with stake it's possible to obtain old keys that no longer hold any value, but still can be used in an attack~\cite{ACLVZ23}.

\subsubsection{The Continuous Chain Model}
Towards \cref{thm:main}, we will first consider the \emph{Continuous Chain Model}.
While it is a very abstract model, it is rich enough to already yield Conditions~\ref{enum:thm:main:cond1}~and~\ref{enum:thm:main:cond2}.
In a nutshell, we assume that a chain \emph{continuously} and \emph{exactly} reflects the resources that were expended to create it.

Assume that the honest parties at time $t$ hold resources $S(t),W(t),V(t)$, then the chain they can create in a time window $[t_0,t_1]$ will have weight 
$\int_{t_0}^{t_1} \Gamma(S(t), V(t), W(t))\,dt$.

This continuous model avoids some issues of actual blockchains, like their probabilistic nature or network delays. 
For example in Bitcoin, a so called $51\%$ attack can actually be conducted with less, say, $41\%$ of the hashing power if network delays are sufficiently large. 
Moreover, as a Bitcoin block is found every 10 minutes \emph{in expectation}, it frequently happens that no block is found in an hour at all. For this reason a block is only considered confirmed if it's sufficiently deep in the chain. These factors only have a \emph{quantitative} impact on the concrete security threshold of longest-chain 
blockchains and are well understood~\cite{CCS:GazKiaRus20,CCS:DKTTVW20}. 
The goal of this paper, however, is \emph{qualitative} in nature.
That is, we want to describe which weight functions are secure against private double-spending attacks as long as honest parties have sufficiently more resources than the adversary.
Precisely quantifying how much security is lost due to the fact that resources are only approximately recorded,  due to network delays or other 
aspects like double dipping attacks is not our goal.

So far we considered a strongly idealized setting where the blockchain recorded the available resources \emph{continuously} and \emph{exactly}, both can not be met in a practical blockchain\footnote{At least not if they use a quantitative resource $S$ or $W$, which only leaves $V$, but speed alone will not make a good chain.} where the quantitative resource $S$ or $W$ is distributed over an a priori unlimited number of miners, but for practical reasons we only want a bounded number to actually give input to a block. 
In Bitcoin and Chia, it is just a single miner that finds a proof that passes some difficulty, and the frequency at which such proofs are found gives an indication of the total resource.
We can get a good approximation of the resources by waiting for sufficiently many blocks or using a design where multiple miners contribute to a block, say we record some $k>1$ best proofs found since the last block in every block. In this work, we will not further deal with the fact that resources are not \emph{exactly} recorded as it is not very informative for the ideal perspective we are taking. In actual constructions like Bitcoin one deals with this by requiring some time before considering blocks as confirmed. The number of blocks to wait is computed using a tail inequality, it depends on the probability of failure one can accept and on the quantitative gap one assumes between honest and adversarial resources.

\subsubsection{Private Double-Spending Attack (PDS)}\label{sec:pds}\label{S:PDSonly}
\paragraph{Why We Focus on PDS}
We analyze the security of weight functions against a specific attack, the \emph{Private Double-Spending} (PDS) attack.
So we do not prove security against arbitrary attacks and cannot rule out that a worse attack than PDS exists.

The works of Dembo~et~al.~\cite{CCS:DKTTVW20} and Gaži~et~al.~\cite{CCS:GazKiaRus20} used different techniques to show that PDS is the worst attack against PoW and Proof-of-Stake-based longest-chain protocols.
The former~\cite{CCS:DKTTVW20} analysis also extends to Chia.
Concretely, they show that if an adversary has sufficient resources to perform some attack against one of these blockchains, they could also perform a private double-spending attack instead. 
This verifies the intuition of Nakamoto, who only considered the double-spending attack  when arguing about Bitcoin's security~\cite{bitcoin}.

The above results~\cite{CCS:DKTTVW20,CCS:GazKiaRus20} are not general enough to imply the same (i.e., that only consider PDS attacks is sufficient) in our more general setting.
For example, their analyses do not capture a blockchain design relying on two different PoWs---a design which our model allows.
Nevertheless, these works give evidence that focusing on PDS attacks is enough, and we believe that the results of \cite{CCS:DKTTVW20} should generalize to our setting. We leave proving or refuting this intuition to future work.

Note that our intuition is based on the technical details of Dembo~et~al.'s analysis~\cite{CCS:DKTTVW20}.
It relies on a connection between PDS and any general attack strategy. 
That is, one \enquote{can view \emph{any} attack as a race between adversary and honest chains, not just the private attack. However, unlike the private attack, a general attack may send many adversary chains to simultaneously race with the honest chain.}~\cite[p. 2]{CCS:DKTTVW20}. 

\paragraph{Explanation of PDS}
In a PDS attack, the adversary forks the chain at some point in time, privately extends its own fork (while honest parties continue to extend the main chain), and releases its fork later on.
The attack is successful if the adversary's fork is at least as heavy as the honest chain since this would allow the adversary to double-spend a transaction.

We let the adversary choose the resources available to honest parties and itself during this time.
The only condition is that we disallow the adversary to trivially perform a successful attack.
That is, at every point in time the adversary resources have at most as much weight as the resources of honest parties, and, to avoid a draw, in some interval strictly less.

The honest chain directly corresponds to the resources of the honest parties.
The adversary, however, may cheat since it is mining in private.
In particular, it can pretend to have created the chain in a shorter or longer amount of time by stretching/squeezing time.
This time manipulation affects the resources recorded on the chain.
For example, consider $W$ as the hash rate, then a chain records the total number of hashes.
If the adversary now pretends to have created this chain in $1/2$ time, then its hashrate must be $2\cdot W$ since the total amount of hashes does not change.

Given such an attack, e.g., the weight function $\Gamma(W) = W^2$ is insecure.
Indeed, consider an adversary with resource $W(t) = 1$ and honest parties with $W(t) = 2$.
Honest parties mining for $1$ time create a chain of weight $1\cdot2^2 = 4$.
In the same timespan, the adversary creates a chain of weight $1\cdot1^2 = 1$.
However, if it pretends to have mined this chain privately in $1/8$ time, then its chain records the weight $1/8\cdot (8\cdot1)^2 = 8$ instead, beating the honest chain.

We defer more precise definitions and figures exemplifying this time manipulation to~\cref{sec:ideal}.

\subsubsection{The Discrete Chain Model}
So far, we discussed an abstract model where the chain \emph{continuously} and \emph{exactly} records resource expenditure. 

In \S\ref{sec:discrete} we discuss a model closer to a real blockchain, where blocks arrive in discrete time slots. 
We still assume the block exactly records the resources $W$ and $V$. In particular, a block produced during some timespan $[a,b)$ records 
$W_\blockk = \int_a^{b} W(t)\,dt$ and $V_\blockk = \int_a^{b} V(t)\,dt$. 
For the space we assume that the block records $S(t)$ at some point $a \leq t < b$.
The reason for this difference is that $W$ and $V$ are resources that are measured \emph{per second} (e.g., hashes/s or steps/s), so integration over time is well-defined.
One the other hand, a proof of space gives a snapshot of the space $S(t)$ available at some point $t$ during block creation. To be on the safe side, we simply assume that 
the adversary can choose the time $t$ where its space was maximal, while for the honest parties we assume $t$ is the time when $S(t)$ was minimal.

We show (Theorem~\ref{thm:discrete}) that the classification of secure weight functions basically carries over to this discrete setting as long as the resources don't vary by too much within the block arrival time.  But our main motivation to consider the discrete model is to discuss the issue of replotting attacks in \S\ref{S:replot}, which only make sense in a discrete setting.

\subsection{Future Work}
Our work opens multiple new questions for future work. 
We already mentioned identifying weight functions that are not only secure, but also interesting at the end of section  \S\ref{sec:intro:impl}. At the end of section \S\ref{S:replot} we will discuss an open question concerning replotting attacks. Some other open questions include:

First, modelling on-chain resources, most notably, stake.
While stake somewhat behaves like disk space\footnote{There exist proposals similar to Chia that use stake instead of disk space, i.e., where the weight is $\text{Stake} \cdot V$~\cite{FC:DebKanTse21}.}, it is different and difficult to model since it is an on-chain resource.
For example, one modelling challenge is capturing long range attacks in which parties sell old keys that controlled a lot of stake at some point.
This is similar to a bootstraping attack for disk space, but the difference is that the adversary can perform this attack for free (after having bought the keys).

Second, considering chain-selection rules other than the heaviest-chain rule.
For example, in the stake setting, Ourboros Genesis~\cite{genesis} operating in dynamically available setting achieves security against PDS using a different chain selection rule.

Third, considering different degrees of permissionless, such as the dynamically-available or quasi-permissionless setting described by~\cite{permissionlessconsensus}.
While our results rule out solely using disk space in the fully-permissionless setting, this impossibility does not carry over to other models.
For example, Filecoin~\cite{filecoin} only uses disk space, but is secure against PDS because it operates in the quasi-permissionless model.

\subsection{Related Work}
\subsubsection{Abstract Resource Models}
Recently, Roughgarden~and~Lewis-Pye~\cite{permissionlessconsensus} (an updated version of~\cite{FC:LewRou23}) presented many (im)possibility results about permissionless consensus.
They consider a resource-restricted adversary where resources can be external or on-chain resources.
External resources are modelled by so-called \emph{permitter oracles}, whose outputs depend on the amount of resources the querying party has at the time of the query.
An important part of their work is a classification of the permissionlessness of consensus protocols:
\begin{itemize}
        \item \emph{Fully-permissionless} protocols are oblivious to its participants (e.g., Bitcoin).
        \item \emph{Dynamically-available} protocols know a dynamic list of parties, which may be a function of the past protocol execution (e.g., parties who staked coins), the participants are a subset of this list, and \emph{at least one} honest member of this list participates.
        \item \emph{Quasi-permissionless} protocols are similar to dynamically-available protocols, but make the stronger requirement that \emph{all} honest members of the list participate.
                Note that such protocols differ from \emph{permissioned} ones, which also have a list of parties, but where the list \emph{cannot} depend on the past protocol execution.
\end{itemize}
For example, a result of theirs states that no deterministic protocol solves Byzantine agreement in the fully-permissionless setting, even with resource restrictions.

Two preceding works modelling abstract resources are Terner~\cite{terner22} and Azouvi et~al.~\cite{ACLVZ23}.
Terner~\cite{terner22} considers an abstract resource that essentially is a black-box governing participant selection.
They give a consensus protocol that can be instantiated with any such resource satisfying certain properties (e.g., resource generation must be rate-limited relative to the maximum message delay). Both \cite{terner22,ACLVZ23} consider only a single resource and not combination of multiple resources.   

Azouvi~et~al.~\cite{ACLVZ23} use the abstraction of resource allocators (similar to permitter oracles in~\cite{permissionlessconsensus}) to build a total-ordered broadcast protocol.
They describe the properties a resource allocator must fulfill (e.g., honest majority), and construct resource allocators for the resources stake, space\footnote{Their space allocator lies in the quasi-permissionless model, thus not conflicting with our results.}, and work.
As part of this, they classify resources as external vs.\ on-chain (they call it virtual), and burnable vs.\ reusable (space and stake are reusable whereas work is not) and discuss trade-offs between different types of resources, e.g., on-chain resources are susceptible to long-range attacks. A limitation of~\cite{ACLVZ23} is that total resource is \emph{a priori} known and fixed. In our model all resources can vary and are known only when the blocks are created.

\subsubsection{Blockchain Designs}
We give a selection of well-known permissionless blockchain designs, describing the weight function or---for non-Nakamoto-like protocols---resource ($S$, $V$ and $W$ as before, and stake $\stake$) and degree of permissionlessness (\textbf{f}ully-\textbf{p}ermissionless, \textbf{d}ynamically-\textbf{a}vailable, or \textbf{q}uasi-\textbf{p}ermissionless):
Bitcoin~\cite{bitcoin} (fp, $W$),
Chia~\cite{chia} (fp, $S \cdot V$),
Filecoin~\cite{filecoin} (qp, $S$),
Ethereum~\cite{ethereum} (da/qp,\footnote{Depending on whether the network is synchronous/partially-synchronous~\cite{permissionlessconsensus}.} $\stake$),
Algorand~\cite{algorand} (qp, $\stake$),
Ouroboros~\cite{C:KRDO17,EC:DGKR18,CCS:BGKRZ18} (da/qp, $\stake$),
Snow White~\cite{FC:DaiPasShi19} (da, $\stake$).

Multiple combinations of proof-of-stake (PoStake) and PoW, e.g.,~\cite{King2012PPCoinPC,BLMR14,FWKKLVW22} exist (all either da or qp).
\cite{FWKKLVW22} is the only \emph{fungible} protocol, i.e., it is secure as long as the adversary controls less than half of all stake and work \emph{cumulatively} (essentially mapping to $\Gamma(\stake, W) = \stake + W$). 
Their protocol handles multiple PoStake and multiple PoW resources, thus capturing $\Gamma(W_1, W_2) = W_1 + W_2$, which we discussed in \cref{sec:intro:impl}.

\cite{FC:DebKanTse21} combine PoStake with sequential computation to create a dynamically-available protocol.

Ignoring difficulty, Bitcoin assigns unit weight to every block whose hash surpasses a threshold.
\cite{KMMNTT21} analyze other functions assigning weight to block hashes.
They suggest using a function that grows exponentially, but is capped at a certain value, which depends on the maximum network delay.

\subsubsection{Analyses of Blockchain Protocols}
Various works analyze specific blockchains, mostly Bitcoin (or similar PoW chains)~\cite{EC:GarKiaLeo15,EC:PasSeeshe17,CSF:PasShi17,EPRINT:Ren19,CCS:GazKiaRus20}, but also longest-chain protocols in general~\cite{CCS:DKTTVW20}.
These generally give quantitative security thresholds (i.e., what fraction of adversarial resources is tolerable) depending on, e.g., maximum message delay.
We again remark that our work has a different aim, namely a qualitative description of weight functions, disregarding precise security thresholds.

\section{Preliminaries} 
Let $[n] = \set{1, \ldots, n}$.
Vectors are typeset as bold-face, e.g., $\bm{x}$.
$\RRgz$ and $\RRgez$ denote the set of positive real numbers excluding and including $0$, respectively. 
Given two tuples $(x_1, \ldots, x_n), (x'_1, \ldots, x'_n) \in \RRgz^n$ we say $(x_1, \ldots, x_n) \leq (x'_1, \ldots, x'_n)$ if $x_i \leq x'_i$ for all $i \in [n]$ with equality holding if and only if $x_i = x'_i$ for all $i \in [n]$. 

We denote the time by $t \in \RRgez$.
For $T_0, T_1 \in \RRgez$ where $T_0<T_1$, $[T_0, T_1]$ denotes the time interval starting at $T_0$ and ending at $T_1$.
The open interval $(T_0,T_1]$ denotes the time interval $[T_0,T_1]$ excluding $T_0$.
$[T_0,T_1)$ is defined analogously.

\begin{definition}[Monotonicity]\label{def:monotone}
	A function $f\colon \RRgz^n \to \RRgz$ is monotonically increasing if 
	\begin{equation*}
		(x_1, \ldots, x_n) \leq (x'_1, \ldots, x'_n) \implies f(x_1, \ldots, x_n) \leq f(x'_1, \ldots, x'_n). 
	\end{equation*}
\end{definition}

\begin{definition}[Homogeneity]\label{def:homogenous}
        A function $f\colon \RRgz^n \to \RRgz$ is homogeneous\footnote{More precisely, $f$ is a positively homogeneous function of degree $1$. However, we will not need homogeneity of higher degree, so we simply call it \enquote{homogeneous}.} in $x_j, \ldots, x_n$ with $1 \leq j \leq n$ if,
        for all $(x_1, \ldots, x_n) \in \RRgz^n$ and $\alpha > 0$,
        \begin{equation*}
                f( x_1, \ldots, x_{j-1}, \alpha \cdot x_j, \ldots, \alpha \cdot x_n) 
                =
                \alpha\cdot f( x_1, \ldots, x_{j-1},  x_j, \ldots, x_n) 
        \end{equation*}
\end{definition}

\begin{definition}[Subhomogeneity]\label{def:subhomo}
        A function $f\colon \RRgz^n \to \RRgz$ is subhomogeneous in $x_1, \ldots, x_j$ with $1 \leq j \leq n$ if, for all $(x_1, \ldots, x_n) \in \RRgz^n$ and $\alpha \geq 1$,
        \begin{equation*}
                f(
                             \alpha \cdot x_1, \ldots, \alpha \cdot x_{j}, 
                             x_{j+1}, \ldots, x_n
                 ) 
                 \leq 
                 \alpha \cdot f(
                             x_1, \ldots, x_{j}, 
                             x_{j+1}, \ldots, x_n
                 ).
        \end{equation*}
\end{definition}

\section{Continuous Chain Model}\label{sec:ideal}
To characterize which weight functions provide security against private double-spending (PDS) attacks, we will first introduce the \emph{Continuous Chain Model}.
It models physical resources, how resources are turned into an idealized blockchain, and PDS attacks.
As the name suggests, the continuous model views the blockchain as one continuous object, instead of consisting of multiple discrete blocks.

\subsection{Modelling Resources}
The model captures physical resources, which are external to the chain, and allows for multiple resources per type.
The resources are disk space $\Sb \eqq (S_1, \ldots, S_{k_1})$, sequential work $\Vb \eqq (V_1, \ldots, V_{k_2})$, and parallel work $\Wb \eqq (W_1, \ldots, W_{k_3})$.\footnote{
        These are three fundamental resources in computation and also the most popular physical resources used for blockchains.
        Nevertheless, we believe our model could be extended to other external resources.
}
It allows for multiple resources per type.
We will omit $k_1$, $k_2$, and $k_3$ unless needed for clarity.

A \emph{resource profile} records the amount of each resource available at any point in time.
Time is modelled as a continuous variable $t \in \RRgez$, and we restrict our attention to the time interval $[0, T]$ for some $T > 0$.
We take each resource to be a function mapping this interval to $\RRgz$, e.g., $W_1 \colon [0, T] \to \RRgz$.

\begin{definition}[Resource Profile]\label{def:resProf}
	A resource profile $\Rc$ is a 3-tuple of tuple of functions 
    \begin{equation*}
            \Rc \eqq \left( \Sb(t), \Vb(t), \Wb(t) \right)_{[0,T]}
    \end{equation*}
     where each tuples of functions is composed of functions with domain $t \in [0,T]$ with $T > 0$ and range $\RRgz$, and where each function is Lebesgue integrable. 
     
\end{definition}

\begin{remark}
        The requirement that each resource is non-zero at every point in time is a minor technical condition.
        Note that it is always fulfilled in practice since interaction with the blockchain requires a general-purpose computer, and even a low-powered one provides a non-zero amount of $S$, $V$, and $W$.
\end{remark}

\subsection{Idealized Chain}
Ideally, a blockchain should record the amount of resources that were expended to create it, and blockchain protocols are generally designed to approximate this as closely as possible.
In our idealized model, we assume that a blockchain \emph{continuously} and \emph{exactly} reflects the resources expended to create it.

\begin{definition}[Continuous Chain Profile]\label{def:chainProf}
	A continuous chain profile $\chain$ is a 3-tuple of tuple of functions 
    \begin{equation*}
            \chain \eqq \left( \Sb(t), \Vb(t), \Wb(t) \right)_{[0,T]}
    \end{equation*}
    where each tuples of functions is composed of functions with domain $t \in [0,T]$ where $T > 0$ and range $\RRgz$, and where each function is Lebesgue integrable.
\end{definition}

\begin{remark}
        Resource and chain profiles are syntactically identical.
        The difference lies in semantics:
        A resource profile describes the resources available to a party (or a set of parties).
        Meanwhile, a chain profile describes the resources that the chain reflects.
\end{remark}

\begin{remark}
        In practice, blockchains do not \emph{exactly} record the amount of resources, but only approximate them.
        For example, in Bitcoin, finding blocks is a probabilistic process, so blocks do not record the actual work invested to create them, but only the \emph{expected} amount of work.
        Additionally, network delays cause miners to waste time (and thereby work) trying to extend an out-of-date block, in the worst case leading to orphaned blocks.
        In spite of these issues, the ideal model is still meaningful because these issues introduce \emph{quantitative} gaps (e.g.,~\cite{CCS:DKTTVW20,EC:GarKiaLeo15,CCS:GazKiaRus20}).
\end{remark}

To capture the heaviest-chain rule, the model assigns each chain a weight.
To this end, we first introduce the \emph{weight function} $\weight$, which assigns a weight to a triple of resources.
In other words, it assigns a weight to one point in time.

\begin{definition}[Weight Function]\label{def:weightFunc}
        A weight function is a non-constant function given by 
        \begin{equation*}
                \weight\colon \RRgz^{k_1} \times \RRgz^{k_2} \times \RRgz^{k_3} \to \RRgz.
        \end{equation*}
\end{definition}

\begin{remark}
        The requirement that $\weight$ is non-constant is natural in practice.
        We explicitly require it because such functions would be vacuously secure against PDS.
        Looking ahead, the security definition only considers adversaries with a resource disadvantage, which is measured using weight.
        But if the weight is constant, no such adversary exists, so the function would always be secure against PDS.
\end{remark}

As said, $\weight$ takes in three resources and outputs the weight of a specific point in time.
In the next step, we extend $\weight$ to compute the weight of a whole chain.
We denote this function by $\cweight$.
It takes a continuous chain profile as input and outputs the weight of it.
Overloading notation slightly, we also allow inputting a resource profile to $\cweight$ since it is syntactically identical to a chain profile.

\begin{definition}[Weight of a Chain or Resource Profile]\label{def:chain_quality}
        Consider a weight function $\weight$ and a continuous chain $\chain = (\Sb(t), \Vb(t), \Wb(t))_{[0, T]}$ or resource profile $\Rc = (\Sb'(t), \Vb'(t), \allowbreak\Wb'(t))_{[0,T]}$.
        The chain weight function $\cweight$ is defined as 
        \begin{align*}
                \cweight(\chain) \eqq \int_{0}^{T} \weight(\Sb(t), \Vb(t), \Wb(t)) \,dt
                \shortintertext{and}
                \cweight({\Rc}) \coloneq \int_{0}^{T} \weight(\Sb'(t), \Vb'(t), \Wb'(t)) \,dt.
        \end{align*}
\end{definition}

\subsection{The Private Double-Spending Attack}
In a private double-spending (PDS) attack, the adversary forks the chain at some point in time, extends this fork in private, before releasing the private fork to the public.
The attack is successful if the adversary's fork is heavier than the honest chain, because the adversarial fork replaces the honest chain, effectively reverting past transactions.
We focus on the PDS attack because it is the prototypical attack against blockchains; we refer back to \cref{sec:pds} for an in-depth discussion.

\subsubsection{Modelling the Attack}
To model this attack, we first consider the time frame of the attack.
The attack starts (i.e., the adversary forks the chain) at time $0$, and the adversary publicly publishes its private chain at time $T_\eend$.
So the attack spans the time interval $[0, T_\eend]$.
During this time, the resources available to the honest parties are given by the resource profile $\Rc^\Hc$, and they use them to build the honest chain profile $\chain^\Hc$ in the following way:

\begin{definition}[Honest Chain Profile]
	    Consider a resource profile $\Rc^\Hc = (\bhsp(t), \bhvdf(t), \allowbreak\bhw(t))_{[0,T_\eend]}$ 
        The corresponding honest chain profile is $\chain^\Hc \eqq \Rc^\Hc$.
\end{definition}

That is, the honest chain $\chain^\Hc$ correctly reflects the resources available to honest parties, and also precisely keeps track at which point in the resources were available.\footnote{
        While this is by construction in our idealized model, timestamps are generally accurate in longest-chain blockchains---even if a not-too-powerful adversary tries to disrupt them~\cite{EPRINT:TziSriZin23}.
}

The adversary's resources are $\Rc^\Ac$, and they use them to build the chain profile $\chain^\Ac$.\footnote{In general, we use the superscripts $\phantom{\cdot}^\Hc$ and $\phantom{\cdot}^\Ac$ to denote the honest parties and the adversary, respectively.}
In contrast to the honest parties, the adversary may deviate from the protocol and cheat.
First, they may simply not use some of the resources available to them.
Second, and more importantly, since the adversary creates the fork in private, the chain $\achain$ may not correctly reflect \emph{at what time} the resources were available.
In essence, the adversary can \emph{alter} the time by \emph{stretching} and \emph{squeezeing} it.
For example, in Bitcoin the adversary may mine a block in $100$ minutes but pretend to have mined it within $10$ minutes.

We model this time manipulation by a function $\phi(t)$ describing the squeezing/stretching factor at any point in time. 
At time $t$, $\phi(t)>1$ represents squeezing, $\phi(t) <1$ stretching, and $\phi(t) = 1$ no alteration.
Altering the time affects, e.g., the length of the interval $[0, T_\eend]$.
To this end, we introduce the \emph{altered time} function $\at$ (and its inverse $\atinv$)\footnote{
        $\atinv$ exists because $\frac{1}{\phi(u)} > 0$ for all $u$, so $\int_{0}^{t} \frac{1}{\phi(u)} \,du$ is a monotonically increasing function of $t$ with co-domain $[0,\at(T_\eend)]$.
} to translate between time before and after squeezing/stretching.
For example, $\Ttil_\eend = \at(T_\eend)$, resulting in the interval $[0, \Ttil_\eend]$.\footnote{We use $\widetilde{\phantom{\circ}}$ to denote time after squeezing/stretching.}

Altering time affects how $\achain$, which covers the time interval $[0, \Ttil_\eend]$, reflects resources.
For the resources $V$ and $W$, altering time cannot change the cumulative amount (e.g., in Bitcoin it cannot change the number of found blocks and thus work performed).
Therefore, $V$ and $W$ must be multiplied by $\phi$.
That is, at altered time $\tilde{t} \in [0, \Ttil_\eend]$, $\achain$ records the resource $\phi(t)V^{\Ac}(t)$ and $\phi(t)W^{\Ac}(t)$.
The disk space $S$ behaves differently.
As long as it is available, it can be reused~\cite{ACLVZ23}, so it does not accumulate (unlike $V$ and $W$).
As a consequence, altering time just changes when space was available.
Thus, at altered time $\tilde{t} \in [0, \Ttil_\eend]$, $\achain$ records $S(\atinv(\tilde{t}))$.

\begin{definition}[Adversarial Chain Profile]\label{def:idealAttack}\label{D:advchainprofile}
	Consider a resource profile $\Rc^\Ac = (\basp(t), \bavdf(t), \allowbreak\baw(t))_{[0,T_\eend]} $ 
	and some function $\phi(t)\colon [0,T_\eend] \to \RRgz$.
    Define $\at(t) \coloneq \int_{0}^{t} \frac{1}{\phi(u)} \,du$ and its inverse $\atinv(\cdot)$.	

	Let $\Ttil_\eend \eqq \at(T_\eend)$.
	An adversarial chain profile is any chain profile 
    \begin{equation*}
            \achain = (\basptil(\ttil), \bavdftil(\ttil), \bawtil(\ttil))_{[0,\Ttil_\eend]}
    \end{equation*}
    where $\asptil_i(\cdot)$, $\avdftil_i(\cdot)$ and $\awtil_i(\cdot)$ are Lebesgue integrable, and satisfy
    \begin{align*}
            0 &< \basptil(\ttil) \le \basp(t)\\
            0 &< \bavdftil(\ttil) \le \phi(t) \cdot \bavdf(t)\\
            0 &< \bawtil(\ttil) \le \phi(t) \cdot \baw(t)
    \end{align*}
	for all $\ttil \in [0, \Ttil_\eend]$ with $t = \atinv(\ttil)$.
\end{definition}

Let us illustrate the stretching and squeezing from \cref{def:idealAttack} by the example of Bitcoin and Chia in \cref{fig:bitcoin,fig:chia}.
\begin{figure}[h]
    \centering
    \includegraphics[page=1,width=\textwidth]{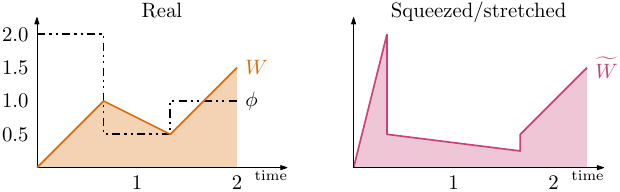}
    \caption{Bitcoin's weight function $W$ and how it reacts to stretching and squeezing. The shaded area is the weight.}\label{fig:bitcoin}
\end{figure}
\begin{figure}[h]
    \centering
    \includegraphics[page=2,width=\textwidth]{images/bcs.pdf}
    \caption{Chia's weight function $S\cdot V$ and how it reacts to stretching and squeezing. The shaded are is the weight.}\label{fig:chia}
\end{figure}

We now have all ingredients to define when a weight function is secure against PDS attacks.
On a high level, the definition states that an adversary having resources of less weight than the honest parties\footnote{
Clearly, a PDS attack always works when the adversary has a resource advantage.
} cannot create a private chain that is heavier than the honest parties one---even by manipulating time.
In more detail, \enquote{less weight} means that the adversary has at most equal weight at every point in time (\cref{eq:secure:1}), and in some interval it has strictly less (\cref{eq:secure:2}).

\begin{definition}[Weight Function Security Against PDS, Continuous Model]\label{def:secure}
        \sloppy A weight function $\weight$ is \emph{secure against private double-spending attack} in the \emph{continuous model} if for all  $\Rc^\Hc = (\bhsp(t), \bhvdf(t), \bhw(t))_{[0,T_\eend]}$ and
        $\Rc^\Ac = (\basp(t), \bavdf(t), \baw(t))_{[0,T_\eend]}$ 
        such that 
        \begin{equation}
                \weight(\basp(t), \bavdf(t), \baw(t)) \leq \weight(\bhsp(t), \bhvdf(t), \bhw(t)) \: \forall t \in [0, T_\eend]\label{eq:secure:1}
        \end{equation}
        and for a time interval $[T_0, T_1]$ 
        \begin{equation}
                \weight(\basp(t), \bavdf(t), \baw(t)) < \weight(\bhsp(t), \bhvdf(t), \bhw(t)) \: \forall t \in [T_0, T_1]\label{eq:secure:2}
        \end{equation}
        it holds that
        \begin{equation*}
                \cweight(\hchain) > \cweight(\achain)
        \end{equation*}
        where $\hchain \eqq \Rc^\Hc$ and $\achain$ satisfies \cref{def:idealAttack} for $\Rc^\Ac$ and any $\phi(t)$.
\end{definition}

\begin{remark}
        An alternative to the precondition (\cref{eq:secure:1,eq:secure:2}) on resource profiles in~\cref{def:secure} is that adversarial resources must be strictly smaller than the honest ones at every point in time (instead of just for an interval).
        Looking ahead, \cref{thm:idealModSec} would be true in the if-direction (monotonically increasing and homogeneous implies secure against PDS), but not in the only-if direction.
        The reason is that not every non-homogeneous function can be attacked (e.g., a function that is $S \cdot \max(V,W)$
        when each resource is below some constant threshold $c$ and that is constant $c^2$ after that).
        If we additionally put the natural constraint that a weight function $\Gamma$ is not eventually constant (i.e., for any point $(\ssb, \vb, \wb)$ there exists $(\ssb',\vb',\wb')$ such that $(\ssb, \vb, \wb) < (\ssb', \vb', \wb')$ and $\Gamma(\ssb, \vb, \wb) < \Gamma(\ssb', \vb', \wb')$), then only-if direction also holds (by an adaption of our proof).
        Either way, the main takeaway is that monotonically increasing and homogeneous functions are the ones secure against PDS, and they are the only ones that should be used to construct Nakamoto-like blockchains using multiple resources.
\end{remark}

\subsection{Main Theorem in the Continuous Model}
There are many possible choices for $\Gamma$, but not all are secure against PDS, i.e., a bad choice for a blockchain.
For example, \cref{fig:PoWSS} shows that $W_1 \cdot W_2$ is insecure, but that $\sqrt{W_1} \cdot \sqrt{W_2}$ seems secure---at least in the example.
The following theorem shows that it is secure against PDS in general, because it is monotone (cf.~\cref{def:monotone}) and homogeneous in $\Wb$ and $\Vb$ (i.e., $\alpha\Gamma(\Sb, \Vb, \Wb) = \Gamma(\Sb, \alpha\Vb, \alpha\Wb)$, cf.~\cref{def:homogenous}). 
These are sufficient, but also necessary conditions for security against PDS in the continuous chain model.

\begin{figure}[h]
        \centering
        \includegraphics[width=\textwidth,page=1]{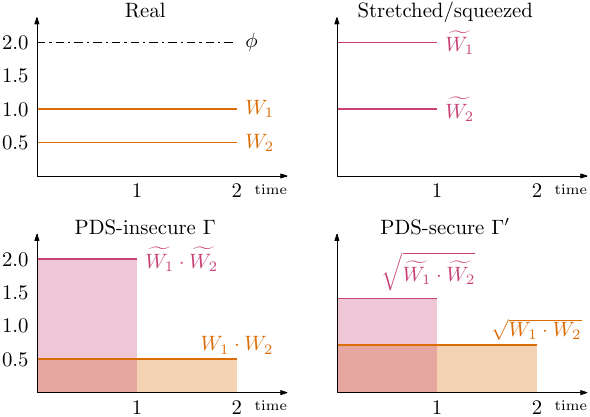}
        \caption{\label{fig:PoWSS}
                Consider two PoWs $W_1, W_2$, and two weight functions $\Gamma(W_1, W_2) = W_1 \cdot W_2$ and $\Gamma'(W_1, W_2) = \sqrt{W_1 \cdot W_2}$.
                The top row show the real resources $\textcolor{myorange}{W_1, W_2}$ (left) and how squeezing them by $\phi(\cdot) = 2$ (left) results in \textcolor{mypink}{$\til{W_1}, \til{W_2}$} (right).
                The bottom row shows that $\Gamma$ is not secure because $\textcolor{myorange}{\int_0^2 W\cdot V} < \textcolor{mypink}{\int_0^1 \widetilde{W} \cdot \widetilde{V}}$, i.e., squeezing increases the weight.
                In contrast, $\Gamma'$ is not affected by the squeezing.
        }
\end{figure}

\begin{theorem}[Secure Weight Functions, Continuous Model]\label{thm:idealModSec}
        A weight function $\weight$ is \emph{secure against private double-spending attacks} in the \emph{continuous model} if and only if $\weight(\Sb,\Vb,\Wb)$ is monotonically increasing (\cref{def:monotone}) and homogeneous in $\Vb,\Wb$ (\cref{def:homogenous}).
\end{theorem}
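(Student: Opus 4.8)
The plan is to prove both directions separately, with the forward (sufficiency) direction being a short change-of-variables computation and the converse (necessity) an explicit attack obtained by a case analysis.

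For the \emph{if} direction, assume $\weight$ is monotone and homogeneous in $\Vb,\Wb$ and fix any profiles $\Rc^\Hc,\Rc^\Ac$ satisfying \cref{eq:secure:1,eq:secure:2} together with any adversarial chain $\achain$ built from $\Rc^\Ac$ via some $\phi$. First I would use monotonicity to replace each inequality in \cref{D:advchainprofile} by equality, so that the recorded configuration at altered time $\ttil$ is $(\basp(t),\phi(t)\bavdf(t),\phi(t)\baw(t))$ with $t=\atinv(\ttil)$. Then I would change variables from $\ttil$ to $t$ using $d\ttil=\tfrac{1}{\phi(t)}\,dt$ (valid since $\at$ is an increasing bijection onto $[0,\Ttilend]$), turning $\bcweight(\achain)$ into $\int_0^{T_\eend}\tfrac{1}{\phi(t)}\weight(\basp(t),\phi(t)\bavdf(t),\phi(t)\baw(t))\,dt$. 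The crux is that homogeneity makes $\phi(t)$ cancel exactly, $\weight(\basp,\phi\bavdf,\phi\baw)=\phi\cdot\weight(\basp,\bavdf,\baw)$, so $\bcweight(\achain)\le\int_0^{T_\eend}\weight(\basp(t),\bavdf(t),\baw(t))\,dt$. Finally \cref{eq:secure:1,eq:secure:2} force this integral strictly below $\bcweight(\hchain)$, since the integrands are pointwise $\le$ and strictly smaller on the positive-measure set $[T_0,T_1]$; this is exactly the security conclusion.

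For the \emph{only if} direction I would argue the contrapositive and build a uniform attack engine from the same identity: with constant resources $(\ssb,\vb,\wb)$ and a constant factor $\phi_0$, the adversary records per unit real time the weight $R\eqq\tfrac{1}{\phi_0}\weight(\ssb,\phi_0\vb,\phi_0\wb)$, whereas its \emph{actual} per-unit weight is only $A\eqq\weight(\ssb,\vb,\wb)$. If $\weight$ is not homogeneous, there is a point and $\alpha>0$ with $\weight(\ssb,\alpha\vb,\alpha\wb)\neq\alpha\weight(\ssb,\vb,\wb)$, and the slightly delicate step is to convert either sign of this failure into a strict gain $R>A$: when $\weight(\ssb,\alpha\vb,\alpha\wb)>\alpha\weight(\ssb,\vb,\wb)$ I take base resources $(\ssb,\vb,\wb)$ with $\phi_0=\alpha$, and when $\weight(\ssb,\alpha\vb,\alpha\wb)<\alpha\weight(\ssb,\vb,\wb)$ I instead switch the base point to $(\ssb,\alpha\vb,\alpha\wb)$ and use $\phi_0=1/\alpha$, which records the configuration $(\ssb,\vb,\wb)$ and again gives $R=\alpha\weight(\ssb,\vb,\wb)>\weight(\ssb,\alpha\vb,\alpha\wb)=A$. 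If instead $\weight$ is not monotone, there are $(\ssb,\vb,\wb)\le(\ssb',\vb',\wb')$ with $\weight(\ssb,\vb,\wb)>\weight(\ssb',\vb',\wb')$; here the adversary holds $(\ssb',\vb',\wb')$ but simply \emph{under-reports}, recording the admissible smaller configuration $(\ssb,\vb,\wb)$, so with no time manipulation $R=\weight(\ssb,\vb,\wb)>\weight(\ssb',\vb',\wb')=A$.

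In every case the adversary has recorded per-unit weight $R$ strictly above its actual per-unit weight $A$, and I would assemble the full attack over $[0,T_\eend]$ in two phases. On $[0,T_1]$ both parties hold the adversary's base resources, so honest and adversarial actual weights coincide (satisfying \cref{eq:secure:1} with equality) while the adversary applies $\phi_0$ and records $R$. On a short interval $[T_1,T_\eend]$ of length $\epsilon$ I enforce \cref{eq:secure:2}: since $\weight$ is non-constant there exist $(\ssb_a,\vb_a,\wb_a),(\ssb_b,\vb_b,\wb_b)$ with $\weight(\ssb_a,\vb_a,\wb_a)<\weight(\ssb_b,\vb_b,\wb_b)$, and I give the honest party the heavier point and the adversary the lighter one. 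The totals are $\bcweight(\hchain)=A\,T_1+\weight(\ssb_b,\vb_b,\wb_b)\,\epsilon$ and $\bcweight(\achain)=R\,T_1+\weight(\ssb_a,\vb_a,\wb_a)\,\epsilon$; since $R>A$, choosing $\epsilon$ small enough relative to $T_1$ makes $\bcweight(\achain)\ge\bcweight(\hchain)$, breaking security. The main obstacle is this necessity direction: handling the subhomogeneous failure by re-choosing the adversary's base resources, and decoupling the strict-interval precondition \cref{eq:secure:2} from the homogeneity/monotonicity violation through the two-phase construction, so that one small $\epsilon$ simultaneously validates the precondition and preserves the attack.
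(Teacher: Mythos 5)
Your proposal is correct, and the sufficiency direction is essentially identical to the paper's \cref{lem:secIfMonoAndHomo}: tighten the inequalities of \cref{D:advchainprofile} by monotonicity, pull out $\phi$ by homogeneity, and substitute $t=\atinv(\ttil)$ so that the $\phi$ factors cancel, leaving $\cweight(\achain)\le\cweight(\ares)<\cweight(\hres)=\cweight(\hchain)$. The necessity direction also uses the same family of attacks as the paper (constant profiles, a constant stretch/squeeze factor, and a long gain phase followed by a short phase that certifies \cref{eq:secure:2}), but your organization is genuinely cleaner in two respects. First, your single ``recorded rate versus actual rate'' computation $R=\tfrac{1}{\phi_0}\weight(\ssb,\phi_0\vb,\phi_0\wb)>\weight(\ssb,\vb,\wb)=A$ is sign-agnostic in $\alpha$, so it collapses the paper's four sub-cases of \cref{lem:notSecIfNotHomo} (Cases 1a--1d, of which 1a is ruled out by monotonicity and 1d needs a separate non-constancy argument) into one step, and it does not use monotonicity at all in the homogeneity part. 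Second, you decouple the strict-interval precondition from the violation itself by appending a uniform short phase built from any two points witnessing non-constancy, whereas the paper realizes the strict interval differently in each sub-case (via monotonicity comparisons in Cases 1b/1c and via non-constancy only in Case 1d). What the paper's longer route buys is explicit closed-form thresholds for $T_0$ in each case; what yours buys is brevity and the observation that the homogeneity attack is independent of monotonicity. Two trivial slips to fix: the continuous chain weight is $\cweight$, not $\bcweight$, and in the sufficiency direction you should note explicitly that $[T_0,T_1]$ has positive length (as guaranteed by the preliminaries) so that the pointwise-strict inequality on it yields a strict inequality of integrals.
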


 We defer the proof to \cref{sec:idealProof}.

\section{Discrete Chain Model}\label{sec:discrete}
The continuous chain model is rather abstract, so we also consider a discretized version using blocks.
A block reflects the \emph{total} amount of resources that were expended to create it.
Honest users create blocks according to some prescribed rule, e.g., in fixed time intervals, but the adversary may not adhere to this rule.

Like in the continuous model, we will describe which weight function $\weight$ leads to a discrete blockchain that is secure against PDS\@.
Compared to the continuous model the security statement introduces quantitative factors.  %
The reason is that resources can fluctuate within a block.
The quantitative parameters essentially state: The higher the magnitude of fluctuations within blocks, the larger the resource disadvantage of the adversary must be.

In principle, this statement also needs to quantitatively depend on how $\weight$ depends on $\Sb$.
The reason is that in our modelling a block reflects $\Sb$ available \emph{at some point in time} during the block creation.
Since $\Sb$ can fluctuate within a block, we pessimistically assume that the adversary always gets the maximum and the honest parties only the minimum.
So, e.g., the function $S^2\cdot V$ requires a larger disadvantage than $S \cdot V$.
To not carry around another parameter, we restrict our attention to natural choices for $\weight$, namely, $\Gamma$ that are subhomogeneous in $\Sb$ (cf.~\cref{def:subhomo})---think of this as \enquote{at most linear in $\Sb$}.

\subsection{Definitions}
We define a blockchain $\bchain$ as the discretization of a resource profile $\res$.
Let us first define a \emph{block}.

\begin{definition}[Blocks]\label{def:block}
	Let $\res = (\Sb(t), \Vb(t), \Wb(t))_{[0,T]}$ be a resource profile.
	A block $b_i$ is defined by a timespan $(t_i, t'_i)$ with $0 \leq t_i < t'_i \leq T$.
	The resources reflected by the block are denoted by $\bSb(b_i)$, $\bVb(b_i)$, and $\bWb(b_i)$.
	
	Timed resources $\bVb$ and $\bWb$ are reflected by 
	\begin{equation*}
		\bVb(b_i) = \int_{t_i}^{t'_i} \Vb(t) \,dt \quad\text{and}\quad \bWb(b_i) = \int_{t_i}^{t'_i} \Wb(t) \,dt.
	\end{equation*}
	The constraint on $\bSb$ is that
	\begin{equation}\label{eq:disc:space-reflected}
		\inf_{t_i < t < t'_i} \Sb(t) \leq \bSb(b_i) < \sup_{t_i < t < t'_i} \Sb(t).
	\end{equation}
	The weight of a block $b$ is $\weight(\bSb(b), \bVb(b), \bWb(b))$.
\end{definition}

The resources $\Vb$ and $\Wb$ accumulate within a block (e.g., a Bitcoin block reflects the expected number of hashes).
$\Sb$ is different (\cite{ACLVZ23} called it \enquote{reusable}), so a block can only reflect \emph{some} amount of space that was available within the block's timespan.

In the sequel, we will often make use of minima and maxima of resources within a block.
For technical reasons, they are defined via infimum and supremum, but think of them as minimum and maximum.

\begin{definition}[Minima and Maxima of Resources]
        For a resource profile $\res = (\Sb(t), \Vb(t),\allowbreak \Wb(t))_{[0,T]}$ we denote the minima/maxima of resources in a block $b$ with timespan $(t',t'')$ by
        \begin{align*}
                \bSbmin(b) &= \inf_{t' < t < t''} \Sb(t)&\bSbmax(b) &= \sup_{t' < t < t''} \Sb(t)\\
                \bVbmin(b) &= \inf_{t' < t < t''} \Vb(t)&\bVbmax(b) &= \sup_{t' < t < t''} \Vb(t)\\
                \bWbmin(b) &= \inf_{t' < t < t''} \Wb(t)&\bWbmax(b) &= \sup_{t' < t < t''} \Wb(t)
        \end{align*} 
        where $\inf$ and $\sup$ are applied element-wise over the whole vector.
\end{definition}

Now, a blockchain is a sequence of non-overlapping blocks.
Its weight $\bcweight$ is the sum of the blocks' weights.

\begin{definition}[Discrete Blockchain]\label{def:disc}
	A discrete blockchain is a sequence of blocks $\bchain = (b_0, \ldots b_B)$ whose timespans do not overlap. 
	The weight of a blockchain is
	\begin{equation}
		\bcweight(\bchain) = \sum_{b_i \in \bchain} \weight(\bSb(b_i), \bVb(b_i), \bWb(b_i)) 
	\end{equation}
\end{definition}

\begin{figure}
	\centering
	\begin{subfigure}[b]{0.45\textwidth}
		\centering
		\includegraphics[width=\textwidth]{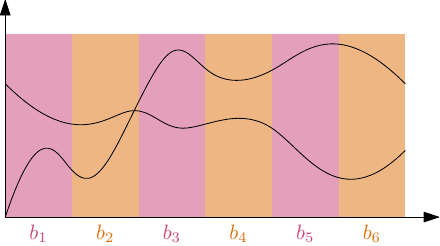}
		\caption{Honest parties.}\label{fig:chunkHon}
	\end{subfigure}
	\hfill
	\begin{subfigure}[b]{0.45\textwidth}
		\centering
		\includegraphics[width=\textwidth]{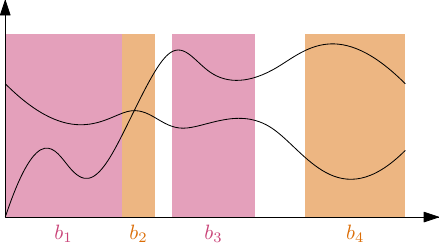}
		\caption{Adversary (Example).}\label{fig:chunkAdv}
	\end{subfigure}
	\caption{
		Discretization of parties.
		Here, honest parties discretize in fixed time intervals, while the adversary may construct blocks in any non-overlapping fashion.
	}\label{Fig:discretechain}
\end{figure}

For honest parties, chain profile and resource profile are identical.
Honest parties discretize their resource profile $\hres$ by following some prescribed rules to create blocks (e.g., in fixed one unit time intervals as depicted in~\cref{fig:chunkHon}).
The resulting blocks are non-overlapping and cover the whole timespan without gaps.
The latter requirement is reasonable since honest parties generally do not waste resources.
Without loss of generality, we assume the time interval to create a block is $1$.

\begin{definition}[Honest Discretization]\label{def:hondisc}
	Let the honest parties' resource profile be $\hres = (\bhsp(t), \bhvdf(t), \bhw(t))_{[0,T]}$.
	Consider a blockchain $\hbchain = (b^\Hc_0, \ldots, b^\Hc_T)$ where each block $b^\Hc_i$ spans the time $(t_i, t'_i)$.
	$\hbchain$ is an honest blockchain arising from $\hres$ if $t_0 = 0$, $t'_T = T$, and $t'_i = i+1$ for all $i \in [T-1]$.
\end{definition}

The adversary also starts from a resource profile $\ares$, but they may cheat when deriving the blockchain from the resources.
In terms of discretization, the adversary may not necessarily follow the prescribed rule.
It may create blocks covering varying timespans, or it might leave gaps between blocks.
The only condition is that blocks don't overlap (as shown in \cref{fig:chunkAdv}).

\begin{definition}[Adversarial Discretization]\label{def:advdisc}
	\sloppy Let the adversary's resource profile $\ares = (\basp(t), \bavdf(t), \baw(t))_{[0,T]}$.
	Consider a blockchain $\abchain = (b^\Ac_0, \ldots, b^\Ac_{B})$ where each block $b^\Ac_i$ spans the time $(t_i, t'_i)$.
	$\abchain$ is an adversarial blockchain arising from $\ares$ if $0 \leq t_0$, $t'_B \leq T$ and $t'_i \leq t_{i+1}$ for all $i \in [B-1]$.
\end{definition}

Looking ahead, the security of the discrete blockchain quantitatively depends on the maximum fluctuation of resources within blocks.
We quantify this fluctuation by the $\xi$-\emph{Smoothness} of resources.
Essentially, $\xi \geq 1$ bounds the absolute change of resources within a block.

\begin{definition}[$\xi$-Smoothness]\label{def:smooth}
	Let $\xi \geq 1$.
	A blockchain $\bchain$ arising from $\res$ satisfies $\xi$-smoothness if, for all blocks $0 \leq i \leq B$, it holds that
	\begin{align*}
		{\bSbmax(b_i)} &\leq \xi\cdot {\bSbmin(b_i)}\\
		{\bVbmax(b_i)} &\leq \xi\cdot {\bVbmin(b_i)}\\
		{\bWbmax(b_i)} &\leq \xi\cdot {\bWbmin(b_i)}.\\
	\end{align*}
\end{definition}

\begin{remark}
        In practice, blockchains generally ensure that resources within a block are relatively smooth, i.e, $\xi$ is small.
        They do so by imposing an upper bound on the resources within a block. 
        For example, Bitcoin's difficulty mechanism essentially puts an upper and lower bound on the work within a block (\cite{KMMNTT21} proposes an alternative way to record work; it has no lower bound, yet still an upper bound).
        This effectively limits the time span a block takes.
        Since physical resources are not very elastic---especially at the quantities that are consumed by blockchains---fluctuation is effectively limited.
\end{remark}

\subsection{Security Statement}
Intuitively, the security statement in the continuous model was:
If the adversary starts out with fewer resources than the honest parties, then the weight of the adversarial chain is lower than that of the honest one.
In the discrete model, the result is a bit weaker because we require a quantitative gap, denoted by $\delta \geq 1$, between honest and adversarial resources.
Together with $\xi$-Smoothness, this leads to the definition of $(\delta, \xi)$-security below.

\begin{definition}[Weight Function Security Against PDS, Discrete Model]\label{def:secdisc}
	A weight function $\Gamma$ is $(\delta, \xi)$-secure against private-double spending in the discrete model if, for all
	honest and adversarial resource profiles $\Rc^\Hc$ and $\Rc^\Ac$
	such that
    \begin{equation}
            \delta \cdot \weight(\basp(t), \bavdf(t), \baw(t)) < \weight(\bhsp(t), \bhvdf(t), \bhw(t)) \: \forall t \in [0, T]\label{eq:discass}
    \end{equation}
	the following holds:
	
	For any $\xi$-smooth (\cref{def:smooth}) blockchains $\hbchain$ and $\abchain$, respectively arising from $\hres$ and $\ares$ according to \cref{def:hondisc,def:advdisc}, it holds that 
	\begin{equation*}
		\bcweight(\hbchain) > \bcweight(\abchain).
	\end{equation*}
\end{definition}

We remark that the adversary is more powerful if $\delta$ is small (i.e., the gap is small) and $\xi$ is large (i.e., resources may fluctuate by a large magnitude).
The following theorem expresses $\xi$  as a function of $\delta$, namely $\xi = \sqrt[4]{\delta}$.
This means that if the gap $\delta$ is small, then only small fluctuations of resources within blocks may be tolerated.

\begin{theorem}[Secure Weight Functions, Discrete Model]\label{thm:discrete}
        For any $\delta \geq 1$, a weight function is $\Gamma(\Sb, \Vb, \Wb)$ is $(\delta, \sqrt[4]{\delta})$-secure against private-double spending (\cref{def:secdisc}) if it is 
        \begin{enumerate}
                \item monotonically increasing;
                \item homogeneous in $\Vb$ and $\Wb$; and
                \item subhomogeneous in $\Sb$.
        \end{enumerate}
\end{theorem}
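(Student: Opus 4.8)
The plan is to sandwich each discrete chain weight between scaled copies of the corresponding \emph{continuous} weight $\cweight$ and then reduce to a comparison of continuous weights. Concretely, writing $\xi = \sqrt[4]{\delta}$, I aim to establish the two bounds
\begin{equation*}
\bcweight(\hbchain) \;\ge\; \frac{1}{\xi^{2}}\,\cweight(\hres)
\qquad\text{and}\qquad
\bcweight(\abchain) \;\le\; \xi^{2}\,\cweight(\ares),
\end{equation*}
after which the theorem follows by chaining with the integrated gap condition.

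For the honest lower bound I would argue block by block. Each honest block $b^\Hc_i$ spans a unit interval, so $\bVb(b^\Hc_i)=\int\bhvdf \ge \bVbmin(b^\Hc_i)$ and likewise for $\bWb$, while $\bSb(b^\Hc_i)\ge\bSbmin(b^\Hc_i)$ by \cref{eq:disc:space-reflected}; monotonicity then gives $\weight(\bSb,\bVb,\bWb)\ge\weight(\bSbmin,\bVbmin,\bWbmin)$. On the other side, monotonicity bounds the continuous integrand over the same interval by $\weight(\bSbmax,\bVbmax,\bWbmax)$. The crux is to compare these two: using $\xi$-smoothness (\cref{def:smooth}) to replace each max by $\xi$ times the corresponding min, then homogeneity in $\Vb,\Wb$ (\cref{def:homogenous}) to pull out one factor $\xi$, and finally subhomogeneity in $\Sb$ (\cref{def:subhomo}) to pull out a second factor $\xi$, yields $\weight(\bSbmax,\bVbmax,\bWbmax)\le\xi^{2}\weight(\bSbmin,\bVbmin,\bWbmin)$. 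Summing over the unit-length honest blocks, which partition $[0,T]$, gives the honest bound.

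The adversary upper bound is the mirror image, except that block durations $\tau_i=t'_i-t_i$ are now arbitrary. For each adversarial block I would use $\bVb(b^\Ac_i)=\int\bavdf\le\tau_i\,\bVbmax$, the analogous bound for $\bWb$, and $\bSb<\bSbmax$; monotonicity and homogeneity in $\Vb,\Wb$ (with $\alpha=\tau_i$) then give $\weight(\bSb,\bVb,\bWb)\le\tau_i\,\weight(\bSbmax,\bVbmax,\bWbmax)$. Conversely, on the block the continuous integrand is at least $\weight(\bSbmax/\xi,\bVbmax/\xi,\bWbmax/\xi)$, and running the homogeneity/subhomogeneity argument in reverse shows this is at least $\xi^{-2}\weight(\bSbmax,\bVbmax,\bWbmax)$; integrating over the block of length $\tau_i$ makes the $\tau_i$ cancel. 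Crucially, time-scaling only touches $\Vb,\Wb$ and never $\Sb$, which is exactly why full homogeneity in $\Vb,\Wb$ together with mere subhomogeneity in $\Sb$ suffices. Summing over the non-overlapping adversarial blocks and using $\weight\ge0$ gives the adversary bound.

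Finally, integrating the pointwise hypothesis \cref{eq:discass} over $[0,T]$ yields the strict inequality $\delta\,\cweight(\ares)<\cweight(\hres)$, since the integrand $\weight(\bhsp,\bhvdf,\bhw)-\delta\,\weight(\basp,\bavdf,\baw)$ is strictly positive on a set of positive measure. Combining the three inequalities,
\begin{equation*}
\bcweight(\hbchain)\ge\frac{1}{\xi^{2}}\cweight(\hres)>\frac{\delta}{\xi^{2}}\cweight(\ares)=\xi^{2}\cweight(\ares)\ge\bcweight(\abchain),
\end{equation*}
where the middle equality uses $\delta=\xi^{4}$; this is precisely where the choice $\xi=\sqrt[4]{\delta}$ is forced. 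The main obstacle I anticipate is the careful bookkeeping around the space resource $\Sb$: it is recorded only through the inf/sup constraint \cref{eq:disc:space-reflected} rather than an integral, and only subhomogeneity (not homogeneity) is available. Consequently the two factors of $\xi$ per side---one from scaling $\Vb,\Wb$ and one from scaling $\Sb$---must be tracked separately, and it is this doubling that degrades the tolerable smoothness from the $\sqrt{\delta}$ one would obtain under full homogeneity to $\sqrt[4]{\delta}$.
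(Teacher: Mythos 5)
Your proposal is correct and follows essentially the same route as the paper's proof: the same sandwich $\bcweight(\hbchain) \geq \xi^{-2}\,\cweight(\hres) > \xi^{2}\,\cweight(\ares) \geq \bcweight(\abchain)$ with $\xi=\sqrt[4]{\delta}$, using monotonicity, homogeneity in $\Vb,\Wb$, subhomogeneity in $\Sb$, and $\xi$-smoothness to extract one factor of $\xi$ for the timed resources and one for space on each side. The only cosmetic difference is that the paper routes the per-block comparison through the block averages $\overline{\bVb},\overline{\bWb}$ before passing to the min/max endpoints, whereas you compare to the endpoints directly.
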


 We defer the proof to \cref{sec:proofDiscrete}.

\subsection{Replotting Attacks}\label{S:replot}
In the discrete model, we also have to consider \emph{replotting attacks}. Such attacks were first discussed in the Spacemint~\cite{FC:PKFGAP18} paper under the term ``space reuse''. The Chia green paper~\cite{chia2} discusses them in more detail.

Replotting attacks are easiest understood when one assumes that disk space is bound to some public key of a wallet.
Then, in a replotting attack, the adversary repeatedly \emph{replots} (i.e., re-initializes) its space using different keys within the time span of a block.
This effectively increases the adversary's space within a block at the cost of extra computation to perform the replotting.
Concretely, we assume replotting takes $\rho > 1$ time (usually, blockchains ensure that this $\rho$ large), and an adversary with $N$ space that replots $m$ times appears to have $(m+1)\cdot N$ space.

\begin{remark}
        No matter the concrete cryptographic primitive used to track space, such attacks seem unavoidable in the fully-permissionless model.
        In other settings, e.g., the quasi-permissionless model, replotting can be disincentivized.
        For example, Filecoin~\cite{filecoin} requires parties to commit to space, and then the parties must continuously prove that they are storing the committed space.
        This prevents replotting if the gap between the required proofs is smaller than the replotting time.
        In practice, this gap might be too small and hence inefficient, so a more delicate security argument is needed; see~\cite{gn23} for details.
\end{remark}

\subsubsection{Extra Assumptions are Necessary}
Without any extra assumptions, replotting leads to attacks in the discrete model.
For the sake of example, consider Chia's weight function $S \cdot V$, which is secure according to \cref{thm:discrete}.
Assume replotting takes time $\rho=2$, $S^\Ac = V^\Ac = 1$ and $S^\Hc = V^\Hc = 1.1$ (this gap suffices since both profiles are $1$-smooth), and consider the timespan $[0, 6]$.
The honest parties create $5$ blocks with a cumulative weight of $6 \cdot (1.1 \cdot 1.1) \approx 7.3$.
The adversary creates one block in which it replots once.
Assuming that the adversary cannot do anything else while replotting (i.e., it can only gain $V$ for $4$ time), the weight of the block is $4 \cdot (2\cdot 1) = 8$.
Note that this attack generalizes to other weight functions $\weight$ and other values of $\rho$.

\subsubsection{A Solution using Difficulty}
One way to disincentive replotting in the discrete model is bounding the total weight of a block $b$.
Consider that the protocol keeps track of a difficulty $D$ that is periodically adjusted so that roughly one block is created per time unit (e.g., in Bitcoin the difficulty is reset once every two weeks so blocks arrive roughly every 10 minutes). 
Further, let $\eta \geq 1$ be a protocol parameter (to be set later).
Then, the protocol bounds the weight of blocks as $D \leq \weight(b) \leq \eta\cdot D$ (abusing notation of $\weight$ slightly).

If we now set $\eta < \rho$, it is not hard to see that replotting does not help:
Replotting even once requires $\rho$ time, and the resulting adversarial block has at most $\eta\cdot D < \rho \cdot D$ weight.
Meanwhile, the honest parties produce around $\rho$ blocks, each of weight at least $D$; so in total $\rho \cdot D$.

Note that this argument requires $D$ to stay fixed, an attacker might still be able to create a heavier chain over a long period of time that spans several epochs (where the difficulty is reset once every epoch).

Chia~\cite{chia2} with its weight function $\Gamma(S,V)=S\cdot V$ uses a similar idea, but it tracks the difficulty of the space and VDF separately. 
The block arrival time is only determined by the VDF difficulty, which is nice as VDF speed hardly fluctuates at all over time.

One can generalize this idea to any weight function that can be written as $\Gamma(\Sb, \Vb, \Wb) = \Gamma_1(\Sb) \cdot \Gamma_2(\Vb, \Wb)$. 
Now one would require that each block that records resources $\vb,\wb,\ssb$ must satisfy $\Gamma_2(\vb,\wb) = D$. One doesn't need to put an additional upper bound on the space 
$\Gamma_1(\ssb)$ if $\Gamma_1$ is subhomogenous, i.e., for any $\alpha>1$ we have $\Gamma_1(\alpha\ssb)\le \alpha\Gamma_1(\ssb)$ (Chia does both, it is (sub)homogenous and has an upper bound).

\subsubsection{Future Work on Replotting}
As mentioned, the above solutions don't formally prevent replotting attacks that range over many epochs. In practice that might not be such a big issue, as extremely long range attacks are not really practical: they  require a lot of resources for a long period of time, and thus are expensive to launch.  Moreover it might be difficult to convince honest parties to accept a very long fork as it's such an obvious attack. It still would be interesting to understand whether it's possible to formally achieve security against replotting attacks, we leave this for future work.

 \paragraph{Financial Conflicts-of-Interest} %
Krzysztof Pietrzak is a scientific advisor to Chia Network Inc.

 \paragraph{Acknowledgements}
 This research was funded in whole or in part by the Austrian Science Fund (FWF) 10.55776/F85. For open access purposes, the author has applied a CC BY public copyright license to any author-accepted manuscript version arising from this submission.

\bibliography{cryptobib/abbrev3,cryptobib/crypto,references}

\newcommand{\etalchar}[1]{$^{#1}$}
\begin{thebibliography}{BGK{\etalchar{+}}18b}

\bibitem[ACL{\etalchar{+}}23]{ACLVZ23}
Sarah Azouvi, Christian Cachin, Duc~V. Le, Marko Vukoli\'{c}, and Luca
  Zanolini.
\newblock {Modeling Resources in Permissionless Longest-Chain Total-Order
  Broadcast}.
\newblock In Eshcar Hillel, Roberto Palmieri, and Etienne Rivi\`{e}re, editors,
  {\em 26th International Conference on Principles of Distributed Systems
  (OPODIS 2022)}, volume 253 of {\em Leibniz International Proceedings in
  Informatics (LIPIcs)}, pages 19:1--19:23, Dagstuhl, Germany, 2023. Schloss
  Dagstuhl -- Leibniz-Zentrum f{\"u}r Informatik.

\bibitem[ADH{\etalchar{+}}23]{filecoinf3}
Steven Allen, Masih~H. Derkani, Jie Hou, Henrique Moniz, Alex North, Matej
  Pavlovic, Aayush Rajasekaran, Alejandro Ranchal-Pedrosa, Jorge~M. Soares,
  Jakub Sztandera, Marko Vukolic, and Jennifer Wang.
\newblock {Fast Finality in Filecoin (F3)}.
\newblock
  https://github.com/filecoin-project/FIPs/blob/master/FIPS/fip-0086.md, 2023.

\bibitem[BBBF18]{C:BBBF18}
Dan Boneh, Joseph Bonneau, Benedikt B{\"u}nz, and Ben Fisch.
\newblock Verifiable delay functions.
\newblock In Hovav Shacham and Alexandra Boldyreva, editors, {\em CRYPTO~2018,
  Part~I}, volume 10991 of {\em {LNCS}}, pages 757--788. Springer, Cham, August
  2018.

\bibitem[BDK{\etalchar{+}}22]{BDKOTVWZ22}
Vivek~Kumar Bagaria, Amir Dembo, Sreeram Kannan, Sewoong Oh, David Tse, Pramod
  Viswanath, Xuechao Wang, and Ofer Zeitouni.
\newblock Proof-of-stake longest chain protocols: Security vs predictability.
\newblock In Jorge~M. Soares, Dawn Song, and Marko Vukolic, editors, {\em
  Proceedings of the 2022 {ACM} Workshop on Developments in Consensus,
  ConsensusDay 2022, Los Angeles, CA, USA, 7 November 2022}, pages 29--42.
  {ACM}, 2022.

\bibitem[BGK{\etalchar{+}}18a]{CCS:BGKRZ18}
Christian Badertscher, Peter Gazi, Aggelos Kiayias, Alexander Russell, and
  Vassilis Zikas.
\newblock Ouroboros genesis: Composable proof-of-stake blockchains with dynamic
  availability.
\newblock In David Lie, Mohammad Mannan, Michael Backes, and XiaoFeng Wang,
  editors, {\em ACM CCS 2018}, pages 913--930. {ACM} Press, October 2018.

\bibitem[BGK{\etalchar{+}}18b]{genesis}
Christian Badertscher, Peter Gazi, Aggelos Kiayias, Alexander Russell, and
  Vassilis Zikas.
\newblock Ouroboros genesis: Composable proof-of-stake blockchains with dynamic
  availability.
\newblock In David Lie, Mohammad Mannan, Michael Backes, and XiaoFeng Wang,
  editors, {\em Proceedings of the 2018 {ACM} {SIGSAC} Conference on Computer
  and Communications Security, {CCS} 2018, Toronto, ON, Canada, October 15-19,
  2018}, pages 913--930. {ACM}, 2018.

\bibitem[BLMR14]{BLMR14}
Iddo Bentov, Charles Lee, Alex Mizrahi, and Meni Rosenfeld.
\newblock Proof of activity: Extending bitcoin's proof of work via proof of
  stake [extended abstract]y.
\newblock {\em SIGMETRICS Perform. Eval. Rev.}, 42(3):34–37, December 2014.

\bibitem[BP25]{eprint:BaigPie25}
Mirza~Ahad Baig and Krzysztof Pietrzak.
\newblock On the (in)security of proofs-of-space based longest-chain
  blockchains.
\newblock Financial Cryptography and Data Security {FC}, 2025.

\bibitem[chi19]{chia2}
The chia network blockchain.
\newblock https://docs.chia.net/green-paper-abstract/, 2019.

\bibitem[CP19]{chia}
Bram Cohen and Krzysztof Pietrzak.
\newblock The chia network blockchain.
\newblock https://docs.chia.net/files/Precursor-ChiaGreenPaper.pdf, 2019.
\newblock This is an early proposal and differs significantly from the
  implemented version~\cite{chia2}.

\bibitem[DFKP15]{C:DFKP15}
Stefan Dziembowski, Sebastian Faust, Vladimir Kolmogorov, and Krzysztof
  Pietrzak.
\newblock Proofs of space.
\newblock In Rosario Gennaro and Matthew J.~B. Robshaw, editors, {\em
  CRYPTO~2015, Part~II}, volume 9216 of {\em {LNCS}}, pages 585--605. Springer,
  Berlin, Heidelberg, August 2015.

\bibitem[DGKR18]{EC:DGKR18}
Bernardo David, Peter Gazi, Aggelos Kiayias, and Alexander Russell.
\newblock Ouroboros praos: An adaptively-secure, semi-synchronous
  proof-of-stake blockchain.
\newblock In Jesper~Buus Nielsen and Vincent Rijmen, editors, {\em
  EUROCRYPT~2018, Part~II}, volume 10821 of {\em {LNCS}}, pages 66--98.
  Springer, Cham, April~/~May 2018.

\bibitem[DKT{\etalchar{+}}20]{CCS:DKTTVW20}
Amir Dembo, Sreeram Kannan, Ertem~Nusret Tas, David Tse, Pramod Viswanath,
  Xuechao Wang, and Ofer Zeitouni.
\newblock Everything is a race and nakamoto always wins.
\newblock In Jay Ligatti, Xinming Ou, Jonathan Katz, and Giovanni Vigna,
  editors, {\em ACM CCS 2020}, pages 859--878. {ACM} Press, November 2020.

\bibitem[DKT21]{FC:DebKanTse21}
Soubhik Deb, Sreeram Kannan, and David Tse.
\newblock {PoSAT}: Proof-of-work availability and unpredictability, without the
  work.
\newblock In Nikita Borisov and Claudia D{\'i}az, editors, {\em FC 2021,
  Part~II}, volume 12675 of {\em {LNCS}}, pages 104--128. Springer, Berlin,
  Heidelberg, March 2021.

\bibitem[DPS19]{FC:DaiPasShi19}
Phil Daian, Rafael Pass, and Elaine Shi.
\newblock Snow white: Robustly reconfigurable consensus and applications to
  provably secure proof of stake.
\newblock In Ian Goldberg and Tyler Moore, editors, {\em FC 2019}, volume 11598
  of {\em {LNCS}}, pages 23--41. Springer, Cham, February 2019.

\bibitem[ES14]{FC:EyaSir14}
Ittay Eyal and Emin~G{\"u}n Sirer.
\newblock Majority is not enough: Bitcoin mining is vulnerable.
\newblock In Nicolas Christin and Reihaneh {Safavi-Naini}, editors, {\em FC
  2014}, volume 8437 of {\em {LNCS}}, pages 436--454. Springer, Berlin,
  Heidelberg, March 2014.

\bibitem[Fil24]{filecoin}
Filecoin.
\newblock Filecoin.
\newblock https://filecoin.io, 2024.

\bibitem[FWK{\etalchar{+}}22]{FWKKLVW22}
Matthias Fitzi, Xuechao Wang, Sreeram Kannan, Aggelos Kiayias, Nikos Leonardos,
  Pramod Viswanath, and Gerui Wang.
\newblock Minotaur: Multi-resource blockchain consensus.
\newblock In {\em Proceedings of the 2022 ACM SIGSAC Conference on Computer and
  Communications Security}, CCS '22, page 1095–1108, New York, NY, USA, 2022.
  Association for Computing Machinery.

\bibitem[GHM{\etalchar{+}}17]{algorand}
Yossi Gilad, Rotem Hemo, Silvio Micali, Georgios Vlachos, and Nickolai
  Zeldovich.
\newblock Algorand: Scaling byzantine agreements for cryptocurrencies.
\newblock In {\em Proceedings of the 26th Symposium on Operating Systems
  Principles}, SOSP '17, page 51–68, New York, NY, USA, 2017. Association for
  Computing Machinery.

\bibitem[GKL15]{EC:GarKiaLeo15}
Juan~A. Garay, Aggelos Kiayias, and Nikos Leonardos.
\newblock The bitcoin backbone protocol: Analysis and applications.
\newblock In Elisabeth Oswald and Marc Fischlin, editors, {\em EUROCRYPT~2015,
  Part~II}, volume 9057 of {\em {LNCS}}, pages 281--310. Springer, Berlin,
  Heidelberg, April 2015.

\bibitem[GKR20]{CCS:GazKiaRus20}
Peter Gazi, Aggelos Kiayias, and Alexander Russell.
\newblock Tight consistency bounds for bitcoin.
\newblock In Jay Ligatti, Xinming Ou, Jonathan Katz, and Giovanni Vigna,
  editors, {\em ACM CCS 2020}, pages 819--838. {ACM} Press, November 2020.

\bibitem[GN]{gn23}
Irene Giacomelli and Luca Nizzardo.
\newblock Filecoin proof of useful space - technical report.

\bibitem[GR22]{AFT:GuoRen22}
Dongning Guo and Ling Ren.
\newblock Bitcoin's latency-security analysis made simple.
\newblock In Maurice Herlihy and Neha Narula, editors, {\em Proceedings of the
  4th {ACM} Conference on Advances in Financial Technologies, {AFT} 2022,
  Cambridge, MA, USA, September 19-21, 2022}, pages 244--253. {ACM}, 2022.

\bibitem[GRR22]{CCS:GazRenRus22}
Peter Gazi, Ling Ren, and Alexander Russell.
\newblock Practical settlement bounds for proof-of-work blockchains.
\newblock In Heng Yin, Angelos Stavrou, Cas Cremers, and Elaine Shi, editors,
  {\em ACM CCS 2022}, pages 1217--1230. {ACM} Press, November 2022.

\bibitem[GRR23]{C:GazRenRus23}
Peter Gazi, Ling Ren, and Alexander Russell.
\newblock Practical settlement bounds for longest-chain consensus.
\newblock In Helena Handschuh and Anna Lysyanskaya, editors, {\em CRYPTO~2023,
  Part~I}, volume 14081 of {\em {LNCS}}, pages 107--138. Springer, Cham, August
  2023.

\bibitem[KMM{\etalchar{+}}21]{KMMNTT21}
Simon~Holmgaard Kamp, Bernardo Magri, Christian Matt, Jesper~Buus Nielsen,
  S{\o}ren~Eller Thomsen, and Daniel Tschudi.
\newblock Weight-based nakamoto-style blockchains.
\newblock In Patrick Longa and Carla R{\`a}fols, editors, {\em Progress in
  Cryptology -- LATINCRYPT 2021}, pages 299--319, Cham, 2021. Springer
  International Publishing.

\bibitem[KN12]{King2012PPCoinPC}
Sunny King and Scott Nadal.
\newblock Ppcoin: Peer-to-peer crypto-currency with proof-of-stake.
\newblock 2012.

\bibitem[KRDO17]{C:KRDO17}
Aggelos Kiayias, Alexander Russell, Bernardo David, and Roman Oliynykov.
\newblock Ouroboros: A provably secure proof-of-stake blockchain protocol.
\newblock In Jonathan Katz and Hovav Shacham, editors, {\em CRYPTO~2017,
  Part~I}, volume 10401 of {\em {LNCS}}, pages 357--388. Springer, Cham, August
  2017.

\bibitem[LPR23]{FC:LewRou23}
Andrew Lewis-Pye and Tim Roughgarden.
\newblock Byzantine generals in the permissionless setting.
\newblock In Foteini Baldimtsi and Christian Cachin, editors, {\em FC 2023,
  Part~I}, volume 13950 of {\em {LNCS}}, pages 21--37. Springer, Cham, May
  2023.

\bibitem[LPR24]{permissionlessconsensus}
Andrew Lewis-Pye and Tim Roughgarden.
\newblock Permissionless consensus.
\newblock https://arxiv.org/abs/2304.14701, 2024.

\bibitem[Nak09]{bitcoin}
Satoshi Nakamoto.
\newblock Bitcoin: A peer-to-peer electronic cash system.
\newblock http://www.bitcoin.org/bitcoin.pdf, 2009.

\bibitem[Pie19]{ITCS:Pietrzak19b}
Krzysztof Pietrzak.
\newblock Simple verifiable delay functions.
\newblock In Avrim Blum, editor, {\em ITCS 2019}, volume 124, pages
  60:1--60:15. {LIPIcs}, January 2019.

\bibitem[PKF{\etalchar{+}}18]{FC:PKFGAP18}
Sunoo Park, Albert Kwon, Georg Fuchsbauer, Peter Gazi, Jo{\"e}l Alwen, and
  Krzysztof Pietrzak.
\newblock {SpaceMint}: {A} cryptocurrency based on proofs of space.
\newblock In Sarah Meiklejohn and Kazue Sako, editors, {\em FC 2018}, volume
  10957 of {\em {LNCS}}, pages 480--499. Springer, Berlin, Heidelberg,
  February~/~March 2018.

\bibitem[PS17]{CSF:PasShi17}
Rafael Pass and Elaine Shi.
\newblock Rethinking large-scale consensus.
\newblock In Boris Köpf and Steve Chong, editors, {\em CSF 2017 Computer
  Security Foundations Symposium}, pages 115--129. {IEEE} Computer Society
  Press, 2017.

\bibitem[PSs17]{EC:PasSeeshe17}
Rafael Pass, Lior Seeman, and {abhi} {shelat}.
\newblock Analysis of the blockchain protocol in asynchronous networks.
\newblock In Jean-S{\'{e}}bastien Coron and Jesper~Buus Nielsen, editors, {\em
  EUROCRYPT~2017, Part~II}, volume 10211 of {\em {LNCS}}, pages 643--673.
  Springer, Cham, April~/~May 2017.

\bibitem[Ren19]{EPRINT:Ren19}
Ling Ren.
\newblock Analysis of {Nakamoto} consensus.
\newblock Cryptology ePrint Archive, Report 2019/943, 2019.

\bibitem[SZ13]{EPRINT:SomZoh13}
Yonatan Sompolinsky and Aviv Zohar.
\newblock Accelerating {Bitcoin}'s transaction processing. {F}ast money grows
  on trees, not chains.
\newblock Cryptology ePrint Archive, Report 2013/881, 2013.

\bibitem[Ter22]{terner22}
Benjamin Terner.
\newblock Permissionless consensus in the resource model.
\newblock In Ittay Eyal and Juan Garay, editors, {\em Financial Cryptography
  and Data Security}, pages 577--593, Cham, 2022. Springer International
  Publishing.

\bibitem[TSZ23]{EPRINT:TziSriZin23}
Apostolos Tzinas, Srivatsan Sridhar, and Dionysis Zindros.
\newblock On-chain timestamps are accurate.
\newblock Cryptology {ePrint} Archive, Report 2023/1648, 2023.

\bibitem[Wes19]{EC:Wesolowski19}
Benjamin Wesolowski.
\newblock Efficient verifiable delay functions.
\newblock In Yuval Ishai and Vincent Rijmen, editors, {\em EUROCRYPT~2019,
  Part~III}, volume 11478 of {\em {LNCS}}, pages 379--407. Springer, Cham, May
  2019.

\bibitem[Woo]{ethereum}
Gavin Wood.
\newblock Ethereum: A secure decentralised generalised transaction ledger.

\end{thebibliography}

 \appendix
 \section{Proof of \cref{thm:idealModSec}}\label{sec:idealProof}
\begin{theorem}[Secure Weight Functions, Continuous Model; \cref{thm:idealModSec} restated]
        A weight function $\weight$ is \emph{secure against private double-spending attacks} in the \emph{continuous model} if and only if $\weight(\Sb,\Vb,\Wb)$ is monotonically increasing (\cref{def:monotone}) and homogeneous in $\Vb,\Wb$ (\cref{def:homogenous}).
\end{theorem}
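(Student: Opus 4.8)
The plan is to prove both directions by directly computing the adversary's optimal chain weight under time manipulation and comparing it to the honest chain. For the \emph{if} direction, assume $\weight$ is monotone and homogeneous in $\Vb,\Wb$. Given any adversarial profile $\Rc^\Ac$ and any warp $\phi$, monotonicity implies the adversary maximizes $\cweight(\achain)$ by recording the largest admissible resources from \cref{def:idealAttack}, i.e.\ $\basp(t),\phi(t)\bavdf(t),\phi(t)\baw(t)$ at altered time $\ttil=\at(t)$. Homogeneity turns this integrand into $\phi(t)\,\weight(\basp(t),\bavdf(t),\baw(t))$, and the substitution $\ttil=\at(t)$, $d\ttil=\tfrac{1}{\phi(t)}\,dt$ cancels the factor $\phi(t)$ exactly, so $\cweight(\achain)\le\int_0^{T_\eend}\weight(\basp,\bavdf,\baw)\,dt=\cweight(\Rc^\Ac)$. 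By the precondition of \cref{def:secure} (strict on $[T_0,T_1]$) this is strictly below $\cweight(\hchain)=\cweight(\Rc^\Hc)$, so time manipulation is useless and $\weight$ is secure.

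For the \emph{only if} direction I argue by contraposition, exhibiting an explicit attack whenever a condition fails; crucially, since \cref{def:secure} demands the honest chain be \emph{strictly} heavier, it suffices for the adversary to merely \emph{tie}. If $\weight$ is not monotone, pick $x<x'$ with $\weight(x)>\weight(x')$, let the honest profile be the constant $x$ and the adversarial one the constant $x'$; the precondition then holds strictly everywhere, yet the adversary, \emph{without} warping time, records the admissible smaller resources $x\le x'$ and reproduces weight $\weight(x)$ per unit time, tying the honest chain.

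If $\weight$ is not homogeneous in $\Vb,\Wb$, then at some base point the map $g(\psi)\eqq\tfrac{1}{\psi}\weight(\ssb,\psi\vb,\psi\wb)$ is non-constant, so there are $\psi_1,\psi_2$ with $g(\psi_2)>g(\psi_1)$. Re-basing at $p\eqq(\ssb,\psi_1\vb,\psi_1\wb)$ and warping by the constant $\phi\equiv\psi_2/\psi_1$ yields a per-time chain weight $B\eqq\psi_1\,g(\psi_2)$ strictly above $A\eqq\weight(p)=\psi_1\,g(\psi_1)$. I let the adversary use the constant profile $p$ almost everywhere, harvesting $B$ per unit time, and let the honest party also use $p$, breaking the required strict gap on only a \emph{tiny} interval $[T_0,T_1]$: if some point has weight $>A$ the honest party uses it there, otherwise (by non-constancy of $\weight$) some point has weight $<A$ and the adversary uses it there instead. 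Either way the honest chain weight tends to $T_\eend\,A$ as $\abs{T_1-T_0}\to 0$ while the adversary harvests essentially $T_\eend\,B>T_\eend\,A$, so the adversary ties or wins once the interval is short enough.

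The main obstacle is exactly this last step: because $\weight$ is not assumed continuous, I cannot pick an honest resource of weight \emph{infinitesimally} above $A$ to realize the strict precondition directly. The resolution is to decouple strictness from the boost — shrink the mandatory ``strict gap'' interval to negligible measure while the multiplicative advantage $B/A>1$ is harvested over the whole timeline — and to lean on the facts that a mere tie already refutes security and that non-constancy of $\weight$ guarantees a witness point of weight $\ne A$. Checking that these piecewise-constant warps are admissible (Lebesgue-integrable recorded resources, a well-defined invertible $\at$) and that weights add across intervals is then routine.
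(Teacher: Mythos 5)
Your proposal is correct, and while the \emph{if} direction and the non-monotone case match the paper's argument essentially verbatim (bound the recorded resources by monotonicity, pull out $\phi$ by homogeneity, cancel it under the substitution $\ttil=\at(t)$; and for non-monotonicity, have the adversary under-report to force a tie), your treatment of the non-homogeneous case is a genuinely different and cleaner route. The paper fixes a witness $\alpha$ with $\weight(\ssb,\alpha\vb,\alpha\wb)=\alpha\weight(\ssb,\vb,\wb)+\beta$, $\beta>0$, and then splits into four sub-cases according to $\alpha\gtrless 1$ and $\weight(\ssb,\alpha\vb,\alpha\wb)\gtrless\weight(\ssb,\vb,\wb)$: two are ruled out by monotonicity, one is a squeezing attack, one a stretching attack, and the degenerate case ($\alpha<1$ with equal weights) needs a separate argument invoking non-constancy of $\weight$, with explicit lower bounds on $T_0$ and differently-shaped honest/adversarial profiles in each sub-case. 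Your observation that non-homogeneity at a base point is exactly non-constancy of $g(\psi)=\psi^{-1}\weight(\ssb,\psi\vb,\psi\wb)$ collapses all of this: re-basing at $(\ssb,\psi_1\vb,\psi_1\wb)$ and applying the single constant warp $\phi\equiv\psi_2/\psi_1$ harvests $\psi_1 g(\psi_2)$ per unit real time versus the honest $\psi_1 g(\psi_1)$, uniformly covering squeezing, stretching, and the paper's degenerate Case 1d. Your handling of the strict-gap precondition --- shrinking the mandatory strict interval to negligible measure and using non-constancy of $\weight$ (guaranteed by \cref{def:weightFunc}) to find a witness point on either the honest or the adversarial side --- is sound, since both sub-cases preserve \cref{eq:secure:1} everywhere and \cref{eq:secure:2} on the small interval, and the $O(\abs{T_1-T_0})$ perturbation is dominated by the multiplicative gap $B/A>1$ over $[0,T_\eend]$. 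What the paper's version buys in exchange for its length is fully explicit attack profiles and concrete thresholds on $T_0$ in each regime; what yours buys is brevity and a single mechanism that makes clear \emph{why} homogeneity is the right condition.
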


We will prove the theorem in three parts using \cref{lem:secIfMonoAndHomo,lem:notSecIfNotMono,lem:notSecIfNotHomo}.

\begin{lemma}[If-Direction of \cref{thm:idealModSec}]\label{lem:secIfMonoAndHomo}
        If $\weight(\Sb,\Vb,\Wb)$ is monotonically increasing and $\weight(\Sb,\Vb,\Wb)$ is homogeneous in $\Vb,\Wb$, then $\cweight(\ares) \geq \cweight(\achain)$ where $\achain$ satisfies \cref{def:idealAttack} for $\ares$ and any $\phi(t)$.

        As a consequence, $\weight$ is secure against PDS if $\weight(\Sb,\Vb,\Wb)$ is monotonically increasing and $\,\weight(\Sb,\Vb,\Wb)$ is homogeneous in $\Vb,\Wb$.
\end{lemma}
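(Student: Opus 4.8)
The plan is to reduce the statement to a pointwise inequality at each instant of \emph{altered} time and then undo the time manipulation via a change of variables. Write the adversarial chain as $\achain = (\basptil, \bavdftil, \bawtil)_{[0,\Ttil_\eend]}$, arising from $\ares = (\basp, \bavdf, \baw)_{[0,T_\eend]}$ and the stretching function $\phi$, and fix $\ttil \in [0,\Ttil_\eend]$ with $t = \atinv(\ttil)$. By the constraints of \cref{def:idealAttack}, the triple $(\basptil(\ttil), \bavdftil(\ttil), \bawtil(\ttil))$ is coordinatewise at most $(\basp(t), \phi(t)\bavdf(t), \phi(t)\baw(t))$, so monotonicity (\cref{def:monotone}) gives $\weight(\basptil(\ttil),\bavdftil(\ttil),\bawtil(\ttil)) \le \weight(\basp(t), \phi(t)\bavdf(t), \phi(t)\baw(t))$. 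Since $\phi(t) > 0$, homogeneity in $\Vb,\Wb$ (\cref{def:homogenous}) with $\alpha = \phi(t)$ rewrites the right-hand side as $\phi(t)\,\weight(\basp(t),\bavdf(t),\baw(t))$, yielding the pointwise bound $\weight(\basptil(\ttil),\bavdftil(\ttil),\bawtil(\ttil)) \le \phi(t)\,\weight(\basp(t),\bavdf(t),\baw(t))$. Both hypotheses enter here, and it is essential that the scaling leaves the $\Sb$-coordinate untouched, matching the fact that space does not accumulate under time manipulation.

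Next I would integrate this bound over $\ttil$ and substitute $\ttil = \at(t)$. As noted in the footnote to \cref{def:idealAttack}, $\at(t) = \int_0^t \phi(u)^{-1}\,du$ is a strictly increasing absolutely continuous bijection of $[0,T_\eend]$ onto $[0,\Ttil_\eend]$ with $\at'(t) = \phi(t)^{-1}$ almost everywhere, so the change-of-variables formula gives
\[
        \cweight(\achain) \;\le\; \int_0^{\Ttil_\eend} \phi(t)\,\weight(\basp(t),\bavdf(t),\baw(t))\,d\ttil \;=\; \int_0^{T_\eend} \weight(\basp(t),\bavdf(t),\baw(t))\,dt \;=\; \cweight(\ares),
\]
where $t = \atinv(\ttil)$ in the first integral and the substitution contributes the factor $\phi(t)^{-1}$, which cancels the $\phi(t)$. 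This establishes the first assertion $\cweight(\achain) \le \cweight(\ares)$. I expect the only delicate point of the whole proof to be the measure-theoretic bookkeeping here, namely justifying the substitution under the Lebesgue-integrability assumptions; this is routine once one records that $\weight$, being monotone, is Borel measurable, so all the composed integrands are measurable.

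Finally, for the \enquote{as a consequence} claim I would combine the above with the security precondition. Integrating \cref{eq:secure:1} over $[0,T_\eend]$ gives $\cweight(\ares) \le \cweight(\hchain)$ (recall $\hchain = \Rc^\Hc$), and since \cref{eq:secure:2} makes the integrand strictly smaller on $[T_0,T_1]$, an interval of positive measure, the standard fact that a strict pointwise inequality on a positive-measure set forces a strict integral inequality upgrades this to $\cweight(\ares) < \cweight(\hchain)$. Chaining $\cweight(\achain) \le \cweight(\ares) < \cweight(\hchain)$ gives $\cweight(\hchain) > \cweight(\achain)$, which is exactly the conclusion required by \cref{def:secure}.
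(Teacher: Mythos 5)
Your proposal is correct and follows essentially the same route as the paper's proof: a pointwise bound via monotonicity then homogeneity with $\alpha = \phi(t)$, followed by the substitution $t = \atinv(\ttil)$ whose Jacobian $1/\phi(t)$ cancels the $\phi(t)$ factor, and finally chaining with $\cweight(\ares) < \cweight(\hres)$ from the security preconditions. Your added remarks on measurability and on the positive measure of $[T_0,T_1]$ are minor refinements of the same argument, not a different approach.
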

\begin{proof}
 the first part of the lemma, consider the adversarial chain profile $\achain$ from~\cref{def:idealAttack}. For any $\ttil \in [0, \Ttilend]$, $\advy$ could create a chain profile such that 
\begin{align*}
	0 &< \basptil(\ttil) \le \basp(\atinv(\ttil)),\\
	0 &< \bavdftil(\ttil) \le \phi(\atinv(\ttil)) \cdot \bavdf(\atinv(\ttil)),\\
	0 &< \bawtil(\ttil) \le \phi(\atinv(\ttil)) \cdot \baw(\atinv(\ttil)).
\end{align*}
Since $\weight$ is monotonic,
\begin{align*}
	&\weight(\basptil(\ttil), \bavdftil(\ttil), \bawtil(\ttil) ) \le \\ &\quad \weight(\basp(\atinv(\ttil)), \phi(\atinv(\ttil)) \cdot \bavdf(\atinv(\ttil)), \phi(\atinv(\ttil)) \cdot \baw(\atinv(\ttil)))
\end{align*}
holds for all $\ttil \in [0, \Ttilend]$. 
Since $\weight$ is also homogeneous in $\Vb,\Wb$, 
\begin{align*}
	&\weight(\basptil(\ttil), \bavdftil(\ttil), \bawtil(\ttil) ) \le \\ &\quad \phi(\atinv(\ttil)) \cdot \weight(\basp(\atinv(\ttil)),  \bavdf(\atinv(\ttil)), \baw(\atinv(\ttil))),
\end{align*}
so we can conclude that
\begin{align*}
	\cweight(\achain) &= \int_{0}^{\Ttilend} \weight(\basptil(\ttil), \bavdftil(\ttil), \bawtil(\ttil)) \,d\ttil\\	
	&\le \int_{0}^{\Ttilend} \phi(\atinv(\ttil)) \cdot \weight(\basp(\atinv(\ttil)), \bavdf(\atinv(\ttil)), \baw(\atinv(\ttil))) \, d\ttil.
\end{align*}
Now, we integrate by substituting\footnote{$\int_{a}^{b}f(g(x))g'(x) \,dx = \int_{g(a)}^{g(b)}f(u) \,du$} $t = \atinv(\ttil)$.
Here, note that $\frac{d}{d\ttil}\atinv(\ttil) = \phi(\atinv(\ttil))$ by the inverse function rule.\footnote{$\frac{d}{dx}f^{-1}(a) = \big(f^{-1}\big)'(a) = \frac{1}{f'(f^{-1}(a))}$}
This leads to
\begin{align*}
	\cweight(\achain) &\leq \int_{\atinv(0)}^{\atinv(\Ttilend)} \phi(T) \cdot \weight(\basp(t), \bavdf(t), \baw(t)) \cdot \frac{1}{
		\phi(t)} \, dt &\\
	&= \int_{0}^{T_\eend} \phi(t) \cdot \weight(\basp(t), \bavdf(t), \baw(t)) \cdot \frac{1}{\phi(t)} \, dt &\\
	&= \int_{0}^{T_\eend} \weight(\basp(t), \bavdf(t), \baw(t)) \, dt &\\
	&= \cweight(\ares).
\end{align*}
This proves the first part of the lemma.

For the second part, note that the preconditions on resources in \cref{def:secure} imply that
\begin{equation*}
	\cweight(\hres) > \cweight(\ares).
\end{equation*}
By the first part of this lemma and since $\hres = \hchain$ by \cref{def:secure}, the second part follows. This completes the proof.
\end{proof}

\begin{lemma}[Only-If-Direction of \cref{thm:idealModSec}, Part I]\label{lem:notSecIfNotMono}
	$\weight$ is not secure against PDS if $\weight(\Sb, \Vb, \Wb)$ is not monotonically increasing. 
\end{lemma}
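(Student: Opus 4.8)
The plan is to exploit the failure of monotonicity directly: the adversary will \emph{hold} a resource profile whose coordinates are large but whose weight is small, and then \emph{record} a pointwise-smaller profile of larger weight on its private chain. This is possible because \cref{def:idealAttack} only upper-bounds the recorded resources by the available ones ($0<\basptil\le\basp$, $0<\bavdftil\le\phi\cdot\bavdf$, $0<\bawtil\le\phi\cdot\baw$), so recording \emph{strictly less} is always permitted---and, as we will see, no time stretching is even needed ($\phi\equiv1$).

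First I would unfold the negation of \cref{def:monotone}. Because the tuple order forces equal weights whenever the two tuples coincide, non-monotonicity supplies two points $q=(\ssb_q,\vb_q,\wb_q)$ and $p=(\ssb_p,\vb_p,\wb_p)$ with $p\le q$ componentwise (strict in at least one coordinate) yet $\weight(p)>\weight(q)$. Fixing horizon $T_\eend=1$ and a small $\epsilon\in(0,1)$, I define the adversarial profile $\ares(t)\eqq q$ for all $t\in[0,1]$ and the honest profile
\[
\hres(t)\eqq\begin{cases}q&t\in[0,1-\epsilon),\\ p&t\in[1-\epsilon,1].\end{cases}
\]
Both are positive and piecewise constant, hence valid (Lebesgue integrable) resource profiles. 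I then check the precondition of \cref{def:secure}: on $[0,1-\epsilon)$ both weights equal $\weight(q)$, satisfying \cref{eq:secure:1} with equality, while on $[1-\epsilon,1]$ the honest weight $\weight(p)$ strictly exceeds $\weight(q)$, satisfying \cref{eq:secure:2} on that interval.

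Taking $\phi\equiv1$ gives $\at=\atinv=\mathrm{id}$ and $\Ttilend=1$, and I let the adversary record $\basptil(\ttil)=\ssb_p$, $\bavdftil(\ttil)=\vb_p$, $\bawtil(\ttil)=\wb_p$ for every $\ttil\in[0,1]$. Since $p\le q$, these obey all three bounds of \cref{def:idealAttack}. Computing the two chain weights,
\[
\cweight(\achain)=\weight(p),\qquad \cweight(\hchain)=(1-\epsilon)\,\weight(q)+\epsilon\,\weight(p),
\]
so that $\cweight(\achain)-\cweight(\hchain)=(1-\epsilon)\big(\weight(p)-\weight(q)\big)>0$. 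This contradicts the security requirement $\cweight(\hchain)>\cweight(\achain)$, so $\weight$ is insecure.

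The subtle point---and the reason the attack works---is that \cref{def:secure} imposes the honest-majority precondition on the adversary's \emph{actual} resources (weight $\weight(q)$) but evaluates the chain weight on the \emph{recorded} profile (weight $\weight(p)$); it is exactly the monotonicity failure $\weight(p)>\weight(q)$ that decouples these and lets the adversary hold the heavier-but-lighter-weight point $q$ while recording the lighter-but-heavier-weight point $p$. The only bookkeeping care needed is to install the mandatory strict interval \cref{eq:secure:2} (hence the $\epsilon$-segment where the honest profile switches to $p$); a constant honest profile equal to $p$ would merely yield a tie $\cweight(\hchain)=\cweight(\achain)$, which already defeats security but is cleaner to state as the strict violation above. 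Notably, this isolates non-monotonicity as the sole cause of insecurity, with time manipulation playing no role.
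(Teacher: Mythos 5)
Your proposal is correct and uses the same core attack as the paper's proof: the adversary holds the componentwise-larger but lighter-weight tuple, sets $\phi\equiv 1$, and records the smaller but heavier tuple, which \cref{def:idealAttack} permits. The only difference is cosmetic bookkeeping of the honest profile --- the paper gives honest parties the heavier point throughout and concludes with a tie $\cweight(\achain)=\cweight(\hchain)$ (already a violation of the strict requirement), whereas you give them the adversary's point except on a short tail and obtain a strict reversal; both satisfy the precondition of \cref{def:secure} and both suffice.
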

\begin{proof}
Suppose $\weight$ is not monotonically increasing, i.e., there exist $(\ssb, \vb, \wb)$ and $(\ssb', \vb', \wb')$ such that $(\ssb, \vb, \wb) < (\ssb', \vb', \wb')$ but $\weight(\ssb, \vb, \wb) > \weight(\ssb', \vb', \wb')$. 
In this case, the adversary can simply put less resources in the adversarial chain than it actually has to get a chain profile of higher weight. 

Formally, for some time $T_\eend>0$, consider the resource profiles
\begin{align*}
	\bhsp(t) &= \ssb, &\: \bhvdf(t) &= \vb, &\: \bhw(t) &= \wb &\quad &\text{for } t \in [0,T_\eend] \\
	\basp(t) &= \ssb'
	, &\: \bavdf(t) &= \vb', &\: \baw(t) &= \wb' &\quad &\text{for } t \in [0,T_\eend].
\end{align*} 
Clearly, the weight of adversarial resources is strictly less than honest resources at every point of time. Now for adversarial chain (\cref{def:idealAttack}) $\advy$ chooses $\phi(t) = 1$ for $t\in[0,T]$. 
Thus, $\at(t) = \atinv(t) = t$ and $\Ttil_\eend = T_\eend$. Then $\advy$ choose 
\begin{align*}
	\basptil(\ttil) &= \ssb \leq \phi(T) \cdot \basp(T) = \ssb'\\ 
	\bavdftil(\ttil) &= \vb \leq \phi(T) \cdot \bavdf(T) = \vb'\\
	\bawtil(\ttil) &=  \wb \leq \phi(T) \cdot \baw(T) = \wb'
\end{align*} for all $\ttil \in [0, \Ttil_\eend]$, where $T = \atinv(\ttil)$. 

Thus,
\begin{align*}
	\cweight(\achain) &= \int_{0}^{\Ttil_\eend} \weight(\ssb, \vb, \wb)\\ 
	&= \int_{0}^{T_\eend} \weight(\ssb, \vb, \wb) = \cweight(\hchain).
\end{align*}
Therefore, $\weight$ is not secure. This completes the proof.
\end{proof}

\begin{lemma}[Only-If-Direction of \cref{thm:idealModSec}, Part II]\label{lem:notSecIfNotHomo}
        $\weight$ is not secure against PDS if $\,\weight(\Sb,\Vb,\Wb)$ is not homogeneous in $\Vb,\Wb$.
\end{lemma}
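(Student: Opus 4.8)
The plan is to exhibit, for any non-homogeneous $\weight$, an explicit pair of resource profiles together with a time-manipulation $\phi$ that satisfies the preconditions of \cref{def:secure} yet yields $\cweight(\achain) \ge \cweight(\hchain)$. The driving observation is that, on an interval where the adversary runs a constant resource $(\ssb,\vb,\wb)$ and applies a constant stretch/squeeze factor $\phi \equiv \alpha$, its chain records $(\ssb,\alpha\vb,\alpha\wb)$ over altered length $1/\alpha$ per unit real time, so its per-time chain weight is $\weight(\ssb,\alpha\vb,\alpha\wb)/\alpha$, whereas its real per-time weight---the quantity constrained by \cref{eq:secure:1,eq:secure:2}---is only $\weight(\ssb,\vb,\wb)$. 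Non-homogeneity (\cref{def:homogenous}) furnishes a point and a scalar for which these two quantities differ.

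First I would normalize the direction of the gap. Non-homogeneity gives $(\ssb,\vb,\wb)$ and $\alpha>0$ with $\weight(\ssb,\alpha\vb,\alpha\wb)\neq\alpha\weight(\ssb,\vb,\wb)$. If $\weight(\ssb,\alpha\vb,\alpha\wb)>\alpha\weight(\ssb,\vb,\wb)$ I keep this point and factor $\alpha$; if instead the inequality goes the other way, I replace the base point by $(\ssb,\alpha\vb,\alpha\wb)$ and the factor by $1/\alpha$, which records $(\ssb,\vb,\wb)$ with per-time weight $\alpha\weight(\ssb,\vb,\wb) > \weight(\ssb,\alpha\vb,\alpha\wb)$ and thus turns the inequality around. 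Either way I obtain a base point $P=(\ssb_0,\vb_0,\wb_0)$ and factor $\alpha_0>0$ with $\weight(\ssb_0,\alpha_0\vb_0,\alpha_0\wb_0)/\alpha_0 > \weight(\ssb_0,\vb_0,\wb_0)$, i.e. the time-manipulation \emph{strictly} increases the per-time weight.

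Then I would assemble the attack on $[0,L+\epsilon]$ from two pieces. On the \emph{main} interval $[0,L]$ both parties hold the constant resource $P$, so \cref{eq:secure:1} holds with equality there; the adversary applies $\phi\equiv\alpha_0$, contributing $L\cdot\weight(\ssb_0,\alpha_0\vb_0,\alpha_0\wb_0)/\alpha_0$ to $\cweight(\achain)$ against the honest $L\cdot\weight(\ssb_0,\vb_0,\wb_0)$---a strict surplus by the previous step. On a short \emph{strictness} interval $[L,L+\epsilon]$ I use non-constancy of $\weight$ (\cref{def:weightFunc}) to pick points $p,q$ with $\weight(p)<\weight(q)$, set the adversary to $p$ and the honest party to $q$ (with $\phi\equiv 1$); this supplies the strict inequality required by \cref{eq:secure:2} while adding only $\epsilon\,\weight(q)$ to the honest weight and $\epsilon\,\weight(p)$ to the adversarial one. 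Comparing totals, the attack succeeds whenever $L\big(\weight(\ssb_0,\alpha_0\vb_0,\alpha_0\wb_0)/\alpha_0-\weight(\ssb_0,\vb_0,\wb_0)\big) > \epsilon\big(\weight(q)-\weight(p)\big)$, which holds for all sufficiently small $\epsilon>0$.

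The main obstacle---and the reason a naive attack fails---is satisfying \cref{eq:secure:2} without sacrificing the weight gain. One cannot simply make the honest party strictly heavier everywhere: for functions that are ``eventually constant'' the non-homogeneity witness may sit at a weight-maximizing point, leaving no room to dominate it pointwise. The construction above sidesteps this by demanding only \emph{equality} of weights on the interval that carries the attack's surplus, and paying the strictness requirement on a vanishingly small interval. I would finally verify the bookkeeping for the piecewise-constant $\phi$ (that $\at$ maps $[0,L]$ and $[L,L+\epsilon]$ to $[0,L/\alpha_0]$ and $[L/\alpha_0,L/\alpha_0+\epsilon]$), that the space constraint $\asptil\le \asp$ of \cref{def:idealAttack} indeed permits recording $\ssb_0$, and the positivity and integrability of all chosen profiles.
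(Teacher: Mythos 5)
Your proof is correct, and it rests on the same core mechanism as the paper's: constant resource profiles plus a constant time-manipulation factor turn the non-homogeneity witness into a strict per-unit-real-time surplus $\weight(\ssb_0,\alpha_0\vb_0,\alpha_0\wb_0)/\alpha_0 - \weight(\ssb_0,\vb_0,\wb_0)>0$, which is then accumulated over a long enough interval. But your packaging is genuinely different and leaner. The paper first reduces to the case $\weight(\ssb,\alpha\vb,\alpha\wb)=\alpha\weight(\ssb,\vb,\wb)+\beta$ with $\beta>0$ and then splits into four sub-cases according to the sign of $\alpha-1$ and the comparison of $\weight(\ssb,\alpha\vb,\alpha\wb)$ with $\weight(\ssb,\vb,\wb)$; two sub-cases realize the strictness requirement of \cref{eq:secure:2} by handing the honest party the strictly heavier of the two bundles on a unit-length tail interval, one sub-case is ruled out using the monotonicity inherited from \cref{lem:notSecIfNotMono}, and only the degenerate sub-case falls back on non-constancy of $\weight$. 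You instead keep the honest and adversarial resources \emph{identical} on the interval that carries the surplus (so \cref{eq:secure:1} holds there with equality) and pay for \cref{eq:secure:2} on a separate interval of length $\epsilon$ using non-constancy alone; this collapses the case analysis into a single construction and, notably, never invokes monotonicity or Part I. Your normalization step (replacing the base point by $(\ssb,\alpha\vb,\alpha\wb)$ and the factor by $1/\alpha$ when the inequality points the wrong way) is exactly the paper's Case-2-to-Case-1 reduction. The bookkeeping you flag at the end---the piecewise form of $\at$, the constraint $\basptil\le\basp$, positivity and integrability of the piecewise-constant profiles---all checks out, and the final comparison $L\bigl(\weight(\ssb_0,\alpha_0\vb_0,\alpha_0\wb_0)/\alpha_0-\weight(\ssb_0,\vb_0,\wb_0)\bigr)>\epsilon\bigl(\weight(q)-\weight(p)\bigr)$ indeed holds for $\epsilon$ small (or $L$ large), giving $\cweight(\achain)>\cweight(\hchain)$, which is more than the $\geq$ needed to violate \cref{def:secure}. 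What the paper's route buys is an explicit quantitative bound on the attack duration in each configuration; what yours buys is brevity and a self-contained argument.
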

\begin{proof}
Due to \cref{def:secure}, if $\weight$ is constant then it is not secure as the preconditions on the resource profiles can not be met. In that case, we are done. From hereon we assume $\weight$ is not a constant function.

Due to~\cref{lem:notSecIfNotMono} we can assume that $\weight(\Sb,\Vb,\Wb)$ is monotonically increasing in $\Sb,\Vb,\Wb$. Suppose $\weight(\Sb,\Vb,\Wb)$ is not homogeneous in $\Vb,\Wb$, i.e., there exists $\alpha > 0$ and $(\ssb,\vb,\wb) \in \RRgz^{k_1+k_2+k_3}$ such that $\weight(\ssb,  \alpha \vb, \alpha \wb) \neq \alpha \weight(\ssb, \vb, \wb).$ Now we have two cases: 
\begin{itemize}
	\item\textbf{Case 1:} $\weight(\ssb, \alpha \vb, \alpha \cdot \wb) > \alpha \weight(\ssb,\vb,\wb)$. 
	
	\item\textbf{Case 2:} $\weight(\ssb, \alpha \cdot \vb, \alpha \cdot \wb) < \alpha \weight(\ssb,\vb,\wb)$
	
	This implies $ \weight(\ssb, \frac{1}{\alpha} \cdot \vb', \frac{1}{\alpha} \wb') > \frac{1}{\alpha} \cdot \weight(\ssb, \vb', \wb')$ where $\vb' = \alpha \vb, \wb' = \alpha \cdot \wb$.
	Since $\frac{1}{\alpha} > 0$, this case reduces to Case 1. 
\end{itemize}
For \textbf{Case 1}, $\weight(\ssb, \alpha \vb, \alpha \wb) > \alpha \weight(\ssb,\vb,\wb)$ is equivalent to 
\begin{equation}\label{eq:nonHomoeq1}
	\weight(\ssb, \alpha \vb, \alpha \wb) = \alpha \weight(\ssb,\vb,\wb) + \beta  
\end{equation} for some $\beta \in \RRgz$. 
Note that $\alpha = 1$ implies $	\weight(\ssb, \alpha \vb, \alpha \cdot \wb) =\weight(\ssb,\vb,\wb) =  \alpha \weight(\ssb,\vb,\wb) + \beta = \weight(\ssb,\vb,\wb) + \beta$. 
Which in turn implies $\beta = 0$, a contradiction. 
Thus, $\alpha \neq 1$.

\textbf{Case 1} can be further divided in sub-cases: 
\begin{itemize}
	\item \textbf{Case 1a:}  $\alpha > 1$ and $\weight(\ssb , \alpha \vb, \alpha \wb)  \leq \weight(\ssb, \vb, \wb)$.
	\item \textbf{Case 1b:} $\alpha > 1$ and $\weight(\ssb , \alpha \vb, \alpha \wb) > \weight(\ssb, \vb, \wb)$
	\item \textbf{Case 1c:} $\alpha < 1$ and $\weight(\ssb , \alpha \vb, \alpha \wb) < \weight(\ssb, \vb, \wb)$.
	\item \textbf{Case 1d:} $\alpha < 1$ and $\weight(\ssb , \alpha \vb, \alpha \wb) \geq \weight(\ssb, \vb, \wb)$.
\end{itemize}
Let's prove each case individually:
\begin{itemize}
	\item[] \textbf{Case 1a:} $\alpha > 1$ and $\weight(\ssb , \alpha \vb, \alpha \wb)  \leq \weight(\ssb, \vb, \wb)$. Since $(\ssb, \vb, \wb) < (\ssb, \alpha \vb, \alpha, \wb)$ and due to monotonicity of $\weight$ (\cref{lem:notSecIfNotMono}), we also have that $\weight(\ssb, \alpha \vb, \alpha \wb) \geq \weight(\ssb, \vb, \wb)$. Thus, $\weight(\ssb, \alpha \vb, \alpha \wb) = \weight(\ssb, \vb, \wb)$. Using~\cref{eq:nonHomoeq1}, we get $$\weight(\ssb, \alpha \vb, \alpha \wb) = \alpha \weight(\ssb, \vb, \wb) + \beta = \weight(\ssb, \vb, \wb). $$ This implies, $(1-\alpha)\weight(\ssb, \vb, \wb) = \beta$. Since $\alpha>1$, the left-hand side is negative while right-hand side is positive. Hence, this case is impossible.\\
	
	\item[] \textbf{Case 1b:} $\alpha > 1$ and $\weight(\ssb , \alpha \vb, \alpha \wb) > \weight(\ssb, \vb, \wb)$. In this case \enquote{squeezing} time gives more weight than the original resources profile. 
	$\advy$ will \enquote{squeeze} $(\vb, \wb)$ by factor $\alpha $ to reach $(\alpha \vb, \alpha \wb)$ and use this to get a higher weight than the honest chain profile.
	
	Formally, for $T_\eend = T_0 + T_1$ where $T_1 = 1$ and $T_0 \geq \frac{\alpha - 1}{\beta} \cdot \weight(\ssb, \alpha \vb, \alpha \wb)$ consider resource profiles $\Rc^{\Hc}$ and $\Rc^{\Ac}$ such that:
	\begin{align*}
		\bhsp(t) &= \ssb, &\: \bhvdf(t) &= \vb, &\: \bhw(t) &= \wb &\quad &\text{for } t \in [0,T_0) \\
		\bhsp(t) &= \ssb, &\: \bhvdf(t) &= \alpha \vb, &\: \bhw(t) &= \alpha \wb &\quad &\text{for } t \in [T_0,T_\eend] \\
		\basp(t) &= \ssb, &\: \bavdf(t) &= \vb, &\: \baw(t) &= \wb &\quad &\text{for } t \in [0,T_\eend]
	\end{align*}
	Since $\weight(\bhsp(t), \bhvdf(t), \bhw(t)) \geq \weight(\basp(t), \bavdf(t), \baw(t))$ for all $t\in[0,T_\eend]$ and $\weight(\bhsp(t), \bhvdf(t), \bhw(t)) > \weight(\basp(t), \bavdf(t), \baw(t))$ for all $t\in [T_0,T_\eend]$, preconditions on resource profiles of \cref{def:secure} are satisfied.
	
	The weight of the honest chain profile is 
	\begin{align*}
		\cweight(\hchain) &= T_0 \cdot \weight(\ssb, \vb, \wb) + T_1 \cdot \weight(\ssb, \alpha \vb, \alpha \wb) &\\
		&=  T_0 \cdot \weight(\ssb, \vb, \wb) + T_1 \cdot \alpha \cdot \weight(\ssb, \vb, \wb) + T_1 \cdot \beta &(\text{by}~\cref{eq:nonHomoeq1})
	\end{align*} 
	$\advy$ chooses $\phi(t) = \alpha$ for all $t\in[0, T_\eend]$. 
	This gives $\at(T) = \frac{T}{\alpha}$, $\atinv(\ttil) = \alpha \ttil$ and $\Ttil_\eend = \frac{T_\eend}{\alpha}$. Setting $\phi(t) = \alpha$ is \enquote{squeezing} as $\alpha > 1$. 
	$\advy$ chooses
	\begin{align*}
		\basptil(\ttil) &= \basp(T) = \ssb\\
		\bavdftil(\ttil) &= \phi(T) \cdot \bavdf(T) = \alpha \vb\\
		\bawtil(\ttil) &= \phi(T) \cdot \baw(T) = \alpha \wb
	\end{align*}
	for all $\ttil \in [0, \Ttil_\eend]$, where $T = \atinv(\ttil) = \alpha \ttil$.
	
	Thus, the weight of the adversarial chain profile is 
	\begin{align*}
		\cweight(\achain) &= \int_{0}^{\Ttil_\eend} \weight(\ssb, \alpha \vb, \alpha \wb) \,dt =  \Ttil_\eend \cdot \weight(\ssb, \alpha \vb, \alpha \wb) &&\\
		&= \frac{T_\eend}{\alpha} \cdot \weight(\ssb, \alpha \vb, \alpha \wb) = \frac{T_0+T_1}{\alpha} \cdot \weight(\ssb, \alpha \vb, \alpha \wb)&\\
		&= \frac{T_0+T_1}{\alpha} \cdot (\alpha \weight(\ssb, \vb, \wb) + \beta)& &\text{by}~\cref{eq:nonHomoeq1}&\\
		&\text{Since } T_0 \geq \frac{\alpha - 1}{\beta} \cdot \weight(\ssb, \alpha\vb, \alpha \wb ) \text{ and } T_1 = 1, &\\&\text{by simplifying, we get}&\\
		&\geq T_0 \weight(\ssb, \vb, \wb)  + T_1(\alpha \weight(\ssb, \vb, \wb) + \beta)\\
		&= \cweight(\hchain).&  
	\end{align*}
	This implies $\cweight(\achain) \geq \cweight(\hchain)$ and hence $\weight$ is not secure.\\

        \item[] \textbf{Case 1c:} $\alpha < 1$ and $\weight(\ssb , \alpha \vb, \alpha \wb) < \weight(\ssb, \vb, \wb)$. This case is the reverse of the previous case. Here \enquote{stretching} leads to a higher weight than the original resource profile. $\advy$ \enquote{stretches} $(\vb, \wb)$ by a factor $\alpha$ into $(\alpha \vb, \alpha \wb)$ in order to get a higher weighted chain profile than the honest chain profile. 

                Formally, let $T_\eend = T_0 + T_1$ where $T_1 = 1$ and $T_0 \geq \frac{\alpha}{\beta}((1-\alpha) \weight(\ssb, \vb, \wb) - \beta), \: T_0 > 0$ and consider the resource profiles $\Rc^{\Hc}$ and $\Rc^{\Ac}$: 
                \begin{align*}
                        \bhsp(t) &= \ssb, &\: \bhvdf(t) &= \vb, &\: \bhw(t) &= \wb &\quad &\text{for } t \in [0,T_\eend] \\
                        \basp(t) &= \ssb, &\: \bavdf(t) &= \vb, &\: \baw(t) &= \wb &\quad &\text{for } t \in [0,T_0) \\
                        \basp(t) &= \ssb, &\: \bavdf(t) &= \alpha \vb, &\: \baw(t) &= \alpha \wb &\quad &\text{for } t \in [T_0,T_\eend] \\
                \end{align*} \sloppy Since $\weight(\bhsp(t), \bhvdf(t), \bhw(t)) \geq \weight(\basp(t), \bavdf(t), \baw(t))$ for all $t\in[0,T_\eend]$ and $\weight(\bhsp(t), \bhvdf(t), \bhw(t)) > \weight(\basp(t), \bavdf(t), \baw(t))$ for all $t\in [T_0,T_\eend]$, the preconditions on resource profiles in \cref{def:secure} are met.

                The weight of the honest chain profile is 
                \begin{align*}
                        \cweight(\hchain) &= T_\eend \cdot \weight(\ssb, \vb, \wb)=(T_0 + T_1) \cdot \weight(\ssb, \vb, \wb). 
                \end{align*} 

                $\advy$ sets $\phi(t) = \alpha$ for all $t \in [0, T_0)$ and $\phi(t) = 1$ for all $t \in [T_0, T_1]$, which gives 
                \[
                        \at(T) = \begin{cases}
                                \frac{T}{\alpha} &\text{for all} \: t \in [0,T_0)\\
                                \frac{T_0}{\alpha} + (T-T_0) &\text{for all} \: t \in [T_0, T_1]
                        \end{cases}
                \]

                \[
                        \atinv(\ttil) = \begin{cases}
                                \alpha \ttil &\text{for all} \: \ttil \in [0,\frac{T_0}{\alpha})\\
                                T_0 + (\ttil - \frac{T_0}{\alpha}) &\text{for all} \: \ttil \in [\frac{T_0}{\alpha}, \Ttil_\eend]
                        \end{cases}
                \] 

                and $\Ttil_\eend = \frac{T_0}{\alpha} + T_1$. Setting $\phi(t) = \alpha$ is \enquote{stretching} as $\alpha < 1$.\\

                Now $\advy$ chooses 
                \begin{align*}
                        \basptil(\ttil) &= \basp(T)\\
                        \bavdftil(\ttil) &= \phi(T) \cdot \bavdf(T)\\
                        \bawtil(\ttil) &= \phi(T) \cdot \baw(T)
                \end{align*} for all $\ttil \in [0, \Ttil_\eend]$, where $T = \atinv(\ttil) = \alpha \ttil$. \\

                Thus, the weight of the adversarial chain profile is 
                \begin{align*}
                        \cweight(\achain) &= \int_{0}^{\at(T_0)} \weight(\basptil(t), \bavdftil(t), \bawtil(t))\,dt \\
                                          &\qquad + \int_{\at(T_0)}^{\Ttil_\eend} \weight(\basptil(t), \bavdftil(t), \bawtil(t))\,dt&\\
                                          &= \int_{0}^{\frac{T_0}{\alpha}} \weight(\ssb, \alpha \vb, \alpha \wb) \,dt\\ &\qquad + \int_{\frac{T_0}{\alpha}}^{\frac{T_0}{\alpha} + T_1} \weight(\ssb, \alpha \vb, \alpha \wb) \,dt\\
                                          &= \frac{T_0}{\alpha} \cdot \weight(\ssb, \alpha \vb, \alpha \wb) + T_1 \cdot \weight(\ssb, \alpha \vb, \alpha \wb)\\
                                          &= \left(\frac{T_0}{\alpha} + T_1 \right)(\alpha \weight(\ssb, \vb, \wb) + \beta) &\text{by} \:~\cref{eq:nonHomoeq1} \\
                                          &\text{Since } T_0 \geq \frac{\alpha}{\beta}((1-\alpha)\weight(\ssb, \vb, \wb) - \beta) \text{ and } T_1 = 1, &\\&\text{by simplifying, we get}&\\
                                          &\geq (T_0 + T_1) \cdot \weight(\ssb, \vb, \wb) \\
                                          &= \cweight(\hchain).
                \end{align*}
                This implies $\cweight(\achain) \geq \cweight(\hchain)$ and thus $\weight$ is not secure.\\

        \item[] \textbf{Case 1d:} $\alpha < 1$ and $\weight(\ssb , \alpha \vb, \alpha \wb) \geq \weight(\ssb, \vb, \wb)$. Since $(\ssb, \alpha \vb, \alpha \wb) < (\ssb, \vb, \wb)$, by monotonicity~\cref{lem:notSecIfNotMono} we have that $\weight(\ssb, \alpha \vb, \alpha \wb) \leq \weight(\ssb, \vb, \wb)$. Thus, $\weight(\ssb, \alpha \vb, \alpha \wb) = \weight(\ssb, \vb, \wb)$. Intuitively, this says that stretching by factor $\frac{1}{\alpha}$ doesn't change the weight but since it increases the time as well it will give higher weight to the resulting chain profile.\\

                To show this formally we need to find two points in resource space such that weight varies among the two points. Since $\weight$ is not constant, there exists $(\ssb', \vb', \wb') \in \RRgz^3$ such that $\weight(\ssb', \vb', \wb') \neq \weight(\ssb, \vb, \wb) = \weight(\ssb, \alpha \vb, \alpha \wb)$. 

                Let $\delta \eqq |\weight(\ssb', \vb', \wb') - \weight(\ssb, \vb, \wb)|.$ \\

                We have two cases: \begin{itemize}
                        \item[] \textbf{Case A:} $\weight(\ssb', \vb', \wb') > \weight(\ssb, \vb, \wb)$ 
                        \item[] \textbf{Case B:} $\weight(\ssb', \vb', \wb') < \weight(\ssb, \vb, \wb)$. 
                \end{itemize}
                We describe the violation of \cref{def:secure} in both cases together while highlighting the differences in the steps as we go: 
                Let $T_\eend = T_0 + T_1$ where $T_1 = 1$ and $T_0 \geq \frac{\delta}{\weight(\ssb, \vb, \wb)\cdot(\frac{1}{\alpha} - 1)}.$

                Consider the resource profiles $\Rc^{\Hc} = (\bhsp(t), \bhvdf(t), \bhw(t))$ and $\Rc^{\Ac} = (\basp(t), \bavdf(t), \baw(t))$ such that: \begin{align*}
                        \bhsp(t) &= \ssb, &\: \bhvdf(t) &= \vb, &\: \bhw(t) &= \wb &\quad &\text{for } t \in [0,T_0) \\
                        \basp(t) &= \ssb, &\: \bavdf(t) &= \vb, &\: \baw(t) &= \wb &\quad &\text{for } t \in [0,T_0] \\
                        \textbf{Case A:}:\\
                        \bhsp(t) &= \ssb', &\: \bhvdf(t) &= \vb', &\: \bhw(t) &= \wb' &\quad &\text{for } t \in [T_0,T_\eend] \\
                        \basp(t) &= \ssb, &\: \bavdf(t) &= \vb, &\: \baw(t) &= \wb &\quad &\text{for } t \in [T_0,T_\eend]\\
                        \textbf{Case B:}:\\
                        \bhsp(t) &= \ssb, &\: \bhvdf(t) &= \vb, &\: \bhw(t) &= \wb &\quad &\text{for } t \in [T_0,T_\eend] \\
                        \basp(t) &= \ssb', &\: \bavdf(t) &= \vb', &\: \baw(t) &= \wb' &\quad &\text{for } t \in [T_0,T_\eend]
                \end{align*} Note that in both cases we have an interval where $\advy$'s resources has strictly lower weight than the $\Hc$'s resources. Thus, it satisfies the precondition on resource profiles in \cref{def:secure}.  

                The weight of the honest chain profile is: 
                \begin{equation*}
                        \cweight(\hchain) = \begin{cases}
                                T_0 \cdot \weight(\ssb, \vb, \wb) + T_1 \cdot \weight(\ssb', \vb', \wb') &\text{for }\textbf{Case A}\\
                                T_0 \cdot \weight(\ssb, \vb, \wb) + T_1 \cdot \weight(\ssb, \vb, \wb) &\text{for }\textbf{Case B} 
                        \end{cases}
                \end{equation*} which, by definition of $\delta$, is same as:
                \begin{equation*}
                        \cweight(\hchain) = \begin{cases}
                                T_0 \cdot \weight(\ssb, \vb, \wb) + \weight(\ssb, \vb, \wb) + \delta &\text{for }\textbf{Case A}\\
                                T_0 \cdot \weight(\ssb, \vb, \wb) + \weight(\ssb', \vb', \wb') + \delta &\text{for }\textbf{Case B} 
                        \end{cases}
                \end{equation*}

                $\advy$ chooses $\phi(t) = \alpha$ for all $t \in [0, T_0)$ and $\phi(t) = 1$ for all $t \in [T_0, T_1]$. This intuitively gives us a stretch by factor $\frac{1}{\alpha}$ (as $\alpha<1$) for $[0,T_0]$ and the remaining time remains the same.

                We get
                \[
                        \at(T) = \begin{cases}
                                \frac{T}{\alpha} &\text{for all} \: t \in [0,T_0)\\
                                \frac{T_0}{\alpha} + (T-T_0) &\text{for all} \: t \in [T_0, T_1]
                        \end{cases}
                \]

                \[
                        \atinv(\ttil) = \begin{cases}
                                \alpha \ttil &\text{for all} \: \ttil \in [0,\frac{T_0}{\alpha})\\
                                T_0 + (\ttil - \frac{T_0}{\alpha}) &\text{for all} \: \ttil \in [\frac{T_0}{\alpha}, \Ttil_\eend]
                        \end{cases}
                \] 

                and $\Ttil_\eend = \frac{T_0}{\alpha} + T_1$.

                $\advy$ chooses 
                \begin{align*}
                        \basptil(\ttil) &= \basp(T) = \ssb\\ 
                        \bavdftil(\ttil) &= \phi(T) \cdot \bavdf(T) = \alpha \vb\\
                        \bawtil(\ttil) &= \phi(T) \cdot \baw(T) = \alpha \wb
                \end{align*} for all $\ttil \in [0, \at(T_0)]$ and
                \begin{align*}	
                        \basptil(\ttil) &= \phi(T) \cdot \basp(T)\\
                        \bavdftil(\ttil) &= \phi(T) \cdot \bavdf(T)\\
                        \bawtil(\ttil) &= \phi(T) \cdot \baw(T)
                \end{align*} for all $\ttil \in [\at(T_0), \Ttil_\eend]$ where $T = \atinv(\ttil) = \alpha \ttil$. \\

                Thus, the weight of adversarial chain profile is
                \begin{align*}
                        \cweight(\achain) &= \int_{0}^{\Ttil_\eend} \weight(\basptil(t), \bavdftil(t), \bawtil(t)) \, dt &\\
                                          &= \int_{0}^{\at(T_0)} \weight(\ssb, \alpha \vb, \alpha \wb) \,dt \\
                                          &+ \int_{\at(T_0)}^{\Ttil_\eend} \weight(\basptil(t), \bavdftil(t), \bawtil(t)) \, dt&
                \end{align*}
                For \textbf{Case A:}, 
                \begin{align*}
                        \cweight(\achain) &= \int_{0}^{\frac{T_0}{\alpha}} \weight(\ssb, \vb, \wb) \, dt + \int_{\frac{T_0}{\alpha}}^{\frac{T_0}{\alpha} + T_1} \weight(\ssb,\vb,\wb)\,dt\\
                                          &= \frac{T_0}{\alpha} \cdot \weight(\ssb, \vb, \wb) + T_1 \cdot \weight(\ssb, \vb, \wb) \\
                                          &\text{Since } T_0 \geq \frac{\delta}{\weight(\ssb, \vb, \wb) \cdot (\frac{1}{\alpha} - 1)} \text{ and } T_1 = 1, &\\&\text{plugging in and simplifying, we get}&\\
                                          &\geq \cweight(\hchain)
                \end{align*}
                For \textbf{Case B:}, 
                \begin{align*}
                        \cweight(\achain) &= \int_{0}^{\frac{T_0}{\alpha}} \weight(\ssb, \vb, \wb) \, dt + \int_{\frac{T_0}{\alpha}}^{\frac{T_0}{\alpha} + T_1} \weight(\ssb',\vb',\wb')\,dt\\
                                          &= \frac{T_0}{\alpha} \cdot \weight(\ssb, \vb, \wb) + T_1 \cdot \weight(\ssb', \vb', \wb') \\
                                          &\text{Since } T_0 \geq \frac{\delta}{\weight(\ssb, \vb, \wb) \cdot (\frac{1}{\alpha} - 1)} \text{ and } T_1 = 1, &\\&\text{plugging in and simplifying, we get}&\\
                                          &\geq \cweight(\hchain)
                \end{align*}
                Thus, in either case we get $\cweight(\achain) \geq \cweight(\hchain)$, and hence $\weight$ is not secure. 
\end{itemize}

This completes the proof. 
\end{proof}

\section{Proof of \cref{thm:discrete}}\label{sec:proofDiscrete}
\begin{theorem}[Secure Weight Functions, Discrete Model; \cref{thm:discrete} restated]
        For any $\delta \geq 1$, a weight function is $\Gamma(\Sb, \Vb, \Wb)$ is $(\delta, \sqrt[4]{\delta})$-secure against private-double spending (\cref{def:secdisc}) if it is 
        \begin{enumerate}
                \item monotonically increasing;
                \item homogeneous in $\Vb$ and $\Wb$; and
                \item subhomogeneous in $\Sb$.
        \end{enumerate}
\end{theorem}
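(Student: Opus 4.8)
The plan is to sandwich both chain weights between constant multiples of the time-integrals of the instantaneous weights $\weight(\basp(t),\bavdf(t),\baw(t))$ and $\weight(\bhsp(t),\bhvdf(t),\bhw(t))$, losing exactly a factor $\xi^2$ on each side due to discretization and $\xi$-smoothness, and then to cancel the combined loss $\xi^4$ against the honest-majority gap $\delta$ via the choice $\xi=\sqrt[4]{\delta}$. Throughout I abbreviate these instantaneous weights as $\gamma^\Ac(t)$ and $\gamma^\Hc(t)$.

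First I would bound the adversary from above, block by block. Fix an adversarial block $b_i^\Ac$ of duration $\Delta_i$. Its recorded space satisfies $\bSb(b_i^\Ac)<\bSbmax(b_i^\Ac)\le \xi\,\bSbmin(b_i^\Ac)$ by \cref{eq:disc:space-reflected} together with $\xi$-smoothness, while its timed resources satisfy $\bVb(b_i^\Ac)=\int\bavdf\le \Delta_i\,\bVbmax(b_i^\Ac)\le \xi\Delta_i\,\bVbmin(b_i^\Ac)$, and likewise for $\bWb$. Applying monotonicity, then subhomogeneity in $\Sb$ (factor $\xi\ge1$), then homogeneity in $\Vb,\Wb$ (factor $\xi\Delta_i$) yields
\[
\weight\big(\bSb(b_i^\Ac),\bVb(b_i^\Ac),\bWb(b_i^\Ac)\big)\le \xi^2\Delta_i\,\weight\big(\bSbmin(b_i^\Ac),\bVbmin(b_i^\Ac),\bWbmin(b_i^\Ac)\big).
\]
Since every resource dominates its block-minimum, monotonicity gives $\weight(\bSbmin,\bVbmin,\bWbmin)\le \gamma^\Ac(t)$ throughout the block, so $\Delta_i\,\weight(\bSbmin,\bVbmin,\bWbmin)\le \int_{b_i^\Ac}\gamma^\Ac(t)\,dt$. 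Summing over the non-overlapping adversarial blocks (gaps only help, as $\gamma^\Ac>0$) gives $\bcweight(\abchain)\le \xi^2\int_0^T\gamma^\Ac(t)\,dt$.

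Next I would bound the honest chain from below by the symmetric argument. Each honest block has duration $1$ and tiles $[0,T]$. Dropping the recorded resources down to their block-minima by monotonicity gives $\weight(\bSb(b_i^\Hc),\bVb(b_i^\Hc),\bWb(b_i^\Hc))\ge \weight(\bSbmin(b_i^\Hc),\bVbmin(b_i^\Hc),\bWbmin(b_i^\Hc))$. Conversely, bounding the instantaneous honest weight up to the block-maxima and using smoothness, subhomogeneity in $\Sb$ and homogeneity in $\Vb,\Wb$ exactly as above gives $\gamma^\Hc(t)\le \xi^2\,\weight(\bSbmin,\bVbmin,\bWbmin)$ on the block; integrating over the unit interval yields $\weight(\bSbmin,\bVbmin,\bWbmin)\ge \tfrac1{\xi^2}\int_{b_i^\Hc}\gamma^\Hc(t)\,dt$. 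Summing over the honest blocks gives $\bcweight(\hbchain)\ge \tfrac1{\xi^2}\int_0^T\gamma^\Hc(t)\,dt$.

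Finally I would combine the two bounds. The honest-majority hypothesis \cref{eq:discass} is pointwise strict, so $\int_0^T\gamma^\Hc>\delta\int_0^T\gamma^\Ac$. With $\xi=\sqrt[4]{\delta}$ we have $\delta/\xi^2=\xi^2$, whence
\[
\bcweight(\hbchain)\ge \tfrac1{\xi^2}\int_0^T\gamma^\Hc> \tfrac{\delta}{\xi^2}\int_0^T\gamma^\Ac=\xi^2\int_0^T\gamma^\Ac\ge \bcweight(\abchain),
\]
which is exactly the claim. The step I expect to be most delicate is the treatment of space: unlike $\Vb,\Wb$ it does not accumulate and is only recorded up to the block-supremum, so the whole argument hinges on $\xi$-smoothness confining the recorded space to a factor $\xi$ above the block-minimum, and on subhomogeneity in $\Sb$ ensuring this $\xi$-inflation costs at most a factor $\xi$ in weight. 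Arranging the bookkeeping so that each side loses precisely $\xi^2$—making $\xi^4=\delta$ the exact break-even point—is the crux.
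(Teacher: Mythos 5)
Your proposal is correct and follows essentially the same route as the paper's proof: both sides lose a factor $\xi^2$ (one $\xi$ from smoothness of $\Sb$ via subhomogeneity, one from smoothness of $\Vb,\Wb$ via homogeneity, with monotonicity handling the recorded-vs.-extremal resources), and the two losses cancel against $\delta=\xi^4$ from the pointwise gap. The only cosmetic difference is that you compare pointwise weights to block minima/maxima directly and exploit the unit length of honest blocks, whereas the paper routes the timed resources through their block averages $\overline{\bVb},\overline{\bWb}$ before applying homogeneity; the content is the same.
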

\begin{proof}
Consider resource profiles $\hres = (\bhsp(t), \bhvdf(t), \bhw(t))_{[0,T]}$ and $\ares = (\basp(t), \allowbreak\bavdf(t), \allowbreak\baw(t))_{[0,T]}$
with $\cweight(\hres) > \delta\cdot\cweight(\ares)$.
Let $\xi = \sqrt[4]{\delta}$ and consider the blockchains $\hbchain = (b^\Hc_0, \ldots, b^\Hc_{T-1})$ and $\abchain = (b^\Ac_0, \ldots, b^\Ac_{B-1})$
which arise from the resource profiles and are $\xi$-smooth.

We will now prove the sequence of inequalities 
\begin{equation*}
	\bcweight(\hbchain) \geq \frac{1}{\xi^2} \cweight(\hres)  > \xi^2\cdot\cweight(\ares) \geq \bcweight(\abchain),
\end{equation*}
which implies the theorem since $\cweight(\ares) \cdot \xi^4 < \cweight(\hres)$ due to \cref{eq:discass}.
We will prove the left and right inequality separately, using one and two lemmas, respectively.

\textbf{Case $\bcweight(\hbchain) \geq \frac{1}{\xi^2} \cweight(\hres)$:} 
By definition of every block $b_i$ with timespan $(t_i, t'_i)$, it follows that
\begin{align*}
	\bcweight(\hbchain) &= \sum_{b_i \in \hbchain} \weight\left(\bSb^\Hc(b_i), \bVb^\Hc(b_i), \bWb^\Hc(b_i)\right) \\
	&= \sum_{b_i \in \hbchain} \weight\left(\bSb^\Hc(b_i), \int_{t_i}^{t'_i} \Vb^\Hc(t) \, dt, \int_{t_i}^{t'_i} \Wb^\Hc(t) \, dt\right)\\
	&\geq \sum_{b_i \in \hbchain} \weight\left(\bSbmin^\Hc(b_i), \int_{t_i}^{t'_i} \Vb^\Hc(t) \, dt, \int_{t_i}^{t'_i} \Wb^\Hc(t)\, dt\right).
\end{align*}
The third line follows by the monotonicity of $\weight$ and the fact that $\bSb^\Hc(b_i) \geq \bSbmin^\Hc(b_i)$ necessarily. 

Let $\overline{\bVb}^\Hc(b_i) = \displaystyle\frac{1}{t'_i-t_i} \cdot \int_{t_i}^{t'_i} \Vb^\Hc(t) \, dt$ denote the average VDF speed within a block.
Clearly, $\bVb\inf^\Hc(b_i) \leq \overline{\bVb}^\Hc(b_i) \leq \bVbmax^\Hc(b_i)$.
Define $\overline{\bWb}^\Hc(b_i)$ analogously.
Using these insights, we continue with
\begin{align*}
	\weight(\hbchain) &\geq \sum_{b_i \in \hbchain} \weight\left(\bSbmin^\Hc(b_i), \int_{t_i}^{t'_i} \Vb^\Hc(t)\, dt, \int_{t_i}^{t'_i} \Wb^\Hc(t)\, dt\right)\\
	&= \sum_{b_i \in \hbchain} (t'_i-t_i) \cdot \weight\left(\bSbmin^\Hc(b_i), \overline{\bVb}^\Hc(b_i), \overline{\bWb}^\Hc(b_i)\right)\\
	&\geq \sum_{b_i \in \hbchain} (t'_i-t_i) \cdot \weight\left(\bSbmin^\Hc(b_i), \bVbmin^\Hc(b_i), \bWbmin^\Hc(b_i)\right)
\end{align*}
where the second line follows as $\weight$ is homogeneous in $\Vb, \Wb$ and the last line follows from monotonicity.

Now we invoke \cref{def:smooth} to switch $\inf$ to $\sup$, that is,
\begin{align*}
	\bcweight(\hbchain) &\geq \sum_{b_i \in \hbchain} (t'_i-t_i) \cdot \weight(\bSbmin^\Hc(b_i), \bVbmin^\Hc(b_i), \bWbmin^\Hc(b_i))\\
	&\geq \frac{1}{\xi} \sum_{b_i \in \hbchain} (t'_i-t_i) \cdot \weight(\bSbmax^\Hc(b_i), \bVbmin^\Hc(b_i), \bWbmin^\Hc(b_i))\\
	&= \frac{1}{\xi^2} \sum_{b_i \in \hbchain} (t'_i-t_i) \cdot \weight(\bSbmax^\Hc(b_i), \bVbmax^\Hc(b_i), \bWbmax^\Hc(b_i))
\end{align*}
where the second line follows from $\weight$ being sub-homogeneous in $\Sb$ and the third from the homogeneity of $\weight$ in $(\Vb, \Wb)$.

This implies the desired inequality because
\begin{align*}
	\bcweight(\hbchain) &\geq \frac{1}{\xi^2} \sum_{b_i \in \hbchain} (t'_i-t_i) \cdot \weight(\bSbmax^\Hc(b_i), \bVbmax^\Hc(b_i), \bWbmax^\Hc(b_i))\\
	&\geq \frac{1}{\xi^2} \sum_{b_i \in \hbchain} \int_{t_i}^{t'_i}\weight(\Sb^\Hc(t), \Vb^\Hc(t), \Wb^\Hc(t)) \, dt\\
	&= \frac{1}{\xi^2} \int_0^T \weight(\Sb^\Hc(t), \Vb^\Hc(t), \Wb^\Hc(t)) \, dt\\
	&= \frac{1}{\xi^2} \cweight(\hres).
\end{align*}
Note that the third line follows because the blocks of honest parties span the whole timespan without gaps by definition.

\textbf{Case $\bcweight(\abchain) \leq \xi^2 \cweight(\ares)$:} 
By definition of every block $b_i$ with timespan $(t_i, t'_i)$, it follows that
\begin{align*}
        \bcweight(\abchain) &= \sum_{b_i \in \abchain} \weight\left(\bSb^\Ac(b_i), \bVb^\Ac(b_i), \bWb^\Ac(b_i)\right) \\
                            &\leq \sum_{b_i \in \abchain} \weight\left(\bSb^\Ac(b_i), \int_{t_i}^{t'_i} \Vb^\Ac(t)\,dt, \int_{t_i}^{t'_i} \Wb^\Ac(t)\,dt\right)\\
                            &\leq \sum_{b_i \in \abchain} \weight\left(\bSbmax^\Ac(b_i), \int_{t_i}^{t'_i} \Vb^\Ac(t)\,dt, \int_{t_i}^{t'_i} \Wb^\Ac(t)\,dt\right).
\end{align*}
The third line follows by the monotonicity of $\weight$ and the fact that $\bSb^\Ac(b_i) \leq \bSbmax^\Ac(b_i)$ necessarily. 

Using the previous insights about the average resources, we continue with
\begin{align*}
        \bcweight(\abchain) &\leq \sum_{b_i \in \abchain} \weight\left(\bSbmax^\Ac(b_i), \int_{t_i}^{t'_i} \Vb^\Ac(t)\,dt, \int_{t_i}^{t'_i} \Wb^\Ac(t)\,dt\right)\\
                            &= \sum_{b_i \in \abchain} (t'_i-t_i) \cdot \weight\left(\bSbmax^\Ac(b_i), \overline{\bVb}^\Ac(b_i), \overline{\bWb}^\Ac(b_i)\right)\\
                            &\leq \sum_{b_i \in \abchain} (t'_i-t_i) \cdot \weight\left(\bSbmax^\Ac(b_i), \bVbmax^\Ac(b_i), \bWbmax^\Ac(b_i)\right)
\end{align*}
where the last line follows from monotonicity.

Now we invoke \cref{def:smooth} to switch $\max$ to $\min$, that is,
\begin{align*}
        \bcweight(\abchain) &\leq \sum_{b_i \in \abchain} (t'_i-t_i) \cdot \weight\left(\bSbmax^\Ac(b_i), \bVbmax^\Ac(b_i), \bWbmax^\Ac(b_i)\right)\\
                            &\leq \xi \sum_{b_i \in \abchain} (t'_i-t_i) \cdot \weight\left(\bSbmin^\Ac(b_i), \bVbmax^\Ac(b_i), \bWbmax^\Ac(b_i)\right)\\
                            &= \xi^2 \sum_{b_i \in \abchain} (t'_i-t_i) \cdot \weight\left(\bSbmin^\Ac(b_i), \bVbmin^\Ac(b_i), \bWbmin^\Ac(b_i)\right)
\end{align*}
where the second line follows from the sub-homogeneity of $\weight$ in $\Sb$ and the third from the homogeneity of $\weight$ in $(\Vb, \Wb)$.

This implies the desired inequality because
\begin{align*}
        \bcweight(\abchain) &\leq \xi^2 \sum_{b_i \in \abchain} (t'_i-t_i) \cdot \weight(\bSbmin^\Ac(b_i), \bVbmin^\Ac(b_i), \bWbmin^\Ac(b_i))\\
                            &\leq \xi^2 \sum_{b_i \in \abchain} \int_{t_i}^{t'_i}\weight(\Sb^\Ac(t), \Vb^\Ac(t), \Wb^\Ac(t)) \, dt\\
                            &\leq \xi^2 \int_0^T \weight(\Sb^\Ac(t), \Vb^\Ac(t), \Wb^\Ac(t)) \, dt\\
                            &= \xi^2 \cweight(\ares).
\end{align*}
Note that the third line follows because the adversary may leave some gaps in time between blocks.
\end{proof}

\end{document}